\newcommand{\hq }{{/\kern -.185em/}}
\theoremstyle{plain}
\newtheorem{theorem} {Theorem}
\newtheorem{lemma} [theorem]{Lemma}
\newtheorem{proposition}[theorem]{Proposition}
\theoremstyle{definition}
\newcommand{\ket}[1]{\mbox{$| #1 \rangle$}}
\newcommand{\bk}[2]{\ensuremath{\langle #1 | #2 \rangle}}
\begin{document}

\title{How many invariant polynomials are needed to decide local unitary equivalence of qubit states? }

\author{Tomasz Maci\c{a}\.{z}ek}
\affiliation{Center for Theoretical Physics, Polish Academy of
Sciences, Al. Lotnik\'ow 32/46, 02-668 Warszawa, Poland}
\affiliation{Faculty of Physics, University of Warsaw, ul. Ho\.{z}a 69, 00-681 Warszawa, Poland}
\author{Micha{\l} Oszmaniec}
\affiliation{Center for Theoretical Physics, Polish Academy of
Sciences, Al. Lotnik\'ow 32/46, 02-668 Warszawa, Poland}
\author{Adam Sawicki}
\affiliation{Center for Theoretical Physics, Polish Academy of
Sciences, Al. Lotnik\'ow 32/46, 02-668 Warszawa, Poland}
\affiliation{School of Mathematics, University of Bristol, University Walk, Bristol BS8 1TW, UK}
%\date{\today}

\begin{abstract}
Given $L$-qubit states with the fixed spectra of reduced one-qubit density matrices, we find a formula for the minimal number
of invariant polynomials needed for solving local unitary (LU) equivalence problem, that is, problem of deciding if two states can be connected by local unitary operations. Interestingly, this number is not the same for every collection of the spectra. Some spectra require less polynomials to solve LU equivalence problem than others. The result is obtained using geometric methods, i.e. by calculating the dimensions of {\it reduced spaces}, stemming from the symplectic reduction procedure.
\end{abstract}

\maketitle

\section{Introduction }
\label{sec:intro}

We consider a quantum system consisting of $L$ isolated qubits with the Hilbert space $\mathcal{H}=\left(\mathbb{C}^{2}\right)^{\otimes L}$.
We assume that states are normalized to one and neglect a global phase. In this way the space of pure states is isomorphic to
the complex projective space $\mathbb{P}(\mathcal{H})$. In the following, by $[\phi]\in\mathbb{P}(\mathcal{H})$ we will
denote a state corresponding to a vector $\phi\in\mathcal{H}.$ Each qubit is located in a different laboratory and the available
operations are restricted to the local unitaries described by the group $K=SU(2)^{\times L}$. Two states are called locally unitary equivalent
(LU equivalent) if and only if they can be connected by the action of $K$, that is, belong to the same $K$-orbit. The problem of local unitary equivalence of states can be in principle solved by finding the set of $K$-invariant polynomials, i.e. polynomials that are constant on $K$-orbits (see \cite{Kraus1,Kraus2} for another approach). When the number of qubits is large this, however, becomes hard as the number of polynomials grows exponentially with the
number of constituents of the system. The problem of LU equivalence for bipartite and three-qubit pure
states was recently studied form the symplecto-geometric perspective \cite{SK11,SWK13} (see also \cite{ZM12}). In particular, the connection with the symplectic reduction
was established. The current paper can be seen as a generalization of these ideas to an arbitrary
number of qubits.

Among the $K$-invariant polynomials there are $L$ polynomials $\left\{ \mathrm{tr}\left(\rho_{l}^{2}([\phi])\right)\right\} _{l=1}^{L}$,
where $\rho_{l}([\phi])$ are the reduced one-qubit density matrices. Consequently, for two LU equivalent states $[\phi_{1,2}]\in\mathbb{P}(\mathcal{H})$
the spectra of the corresponding reduced one-qubit density matrices are the same. If we denote by $\Psi$ the map which assigns
to $[\phi]\in\mathbb{P}(\mathcal{H})$ the shifted spectra of its reduced one-qubit density matrices, i.e. $\Psi([\phi])=\{\mathrm{diag}(-\lambda_1,\lambda_1),\ldots,\mathrm{diag}(-\lambda_L,\lambda_L)\}$, where $\lambda_i=\frac{1}{2}-p_i$ and $\{p_i,1-p_i\}$ is the increasingly ordered spectrum of $\rho_i([\phi])$, then the states satisfying the
above necessary condition form a fiber of $\Psi$. Fibers of $\Psi$ are connected collections of $K$-orbits \cite{Kirwan-thesis}. Moreover, the image, $\Psi(\mathbb{P}(\mathcal{H}))$, is a convex polytope \cite{K84}. The polynomials $\left\{ \mathrm{tr}\left(\rho_{l}^{2}([\phi])\right)\right\} _{l=1}^{L}$ restricted to a fiber of $\Psi$ are constant functions. Therefore, typically, for two states $[\phi_{1}]$ and $[\phi_{2}]$ with $\alpha:=\Psi([\phi_{1}])=\Psi([\phi_{2}])$, where $\alpha$ denote the collection of spectra of one-qubit density matrices, some additional $K$-invariant polynomials are needed to decide the LU equivalence. The number of these polynomials is given by the dimension of the {\it reduced space} $M_{\alpha}:=\Psi^{-1}\left(\alpha\right)/K$ (see \cite{SK11,SWK13}). Interestingly, $\mathrm{dim}M_{\alpha}$
may not be the same for every $\alpha\in\Psi(\mathbb{P}(\mathcal{H}))$, that is, some collections of spectra of reduced
one-qubit density matrices require less additional polynomials to solve LU equivalence problem than others. In particular,
if the fiber $\Psi^{-1}\left(\alpha\right)$ contains exactly one $K$-orbit, i.e. $\mathrm{dim}M_{\alpha}=0$, no additional
information is needed and any two states $[\phi_{1,2}]\in\Psi^{-1}\left(\alpha\right)$ are LU equivalent.

In this paper we find the formula for the dimension of the reduced space $M_{\alpha}=\Psi^{-1}(\alpha)/K$, for any $\alpha\in\Psi(\mathbb{P}(\mathcal{H}))$ and for an arbitrary number $L$ of qubits. Our result is obtained in two steps. First, we consider the points $\alpha_{gen}\in\Psi(\mathbb{P}(\mathcal{H}))$
which belong to the interior of the polytope $\Psi(\mathbb{P}(\mathcal{H}))$. In this case the map $\Psi$ is {\it regular}
and the calculation is rather straightforward. The dimension of $M_{\alpha_{gen}}$ does not depend on $\alpha_{gen}$. Moreover, $\mathrm{dim}M_{\alpha_{gen}}+L$
is equal to the cardinality of the spanning set of $K$-invariant polynomials. For points $\alpha_{b}\in\Psi(\mathbb{P}(\mathcal{H}))$
which belong to the boundary of $\Psi(\mathbb{P}(\mathcal{H}))$ the problem requires more advanced methods and turns out
to be more interesting. In particular, for a large part of the boundary of $\Psi(\mathbb{P}(\mathcal{H}))$ we have $\mathrm{dim}M_{\alpha_{b}}=0$. We also observe
that for $\alpha_{b}\in\Psi(\mathbb{P}(\mathcal{H}))$ corresponding to the $\left\{ \rho_{l}([\phi])\right\} _{l=1}^{L}$ such that $k$ matrices are
maximally mixed $\mathrm{dim}M_{\alpha_{b}}=\mathrm{dim}M_{\alpha_{gen}}-2k$.

\section{How many invariant polynomials are needed to decide LU equivalence of 4-qubit states? }
\label{sec:four-qubits}
In this section we briefly discuss the considered problem and present the main results of the paper on the 4 qubits example.

Recently, the problem of finding $\mathrm{dim}M_{\alpha}$ was considered for three qubits \cite{SWK13}. In particular it was shown
that for points in the interior of the polytope $\Psi(\mathbb{P}(\mathcal{H}))$, $\mathrm{dim}M_{\alpha}=2$, whereas for
points on the boundary $\mathrm{dim}M_{\alpha}=0$. The uniform behaviour of $\mathrm{dim}M_{\alpha}$ on the boundary of
$\Psi(\mathbb{P}(\mathcal{H}))$ in case of three qubits is, as already indicated in ref.\cite{SWK13}, a low dimensional phenomenon.
As we explain in section \ref{sec:The-reduced-spaces}, for an arbitrary number of $L$ qubits the boundary consists of three parts characterized by a different behaviour of $\mathrm{dim}M_{\alpha}$. The first part is the polytope $\Psi(\mathbb{P}(\tilde{\mathcal{H}}))$ that corresponds to a system with one qubit less, that is, $\tilde{\mathcal{H}}=\left(\mathbb{C}^2\right)^{\otimes (L-1)}$. The second corresponds to changing one of the non-trivial inequalities (\ref{nier1}) into an equality. The third represents situations when $k$ one-qubit density matrices are maximally mixed. The clear distinction between these parts of the boundary can be seen already in the four qubits case.

The four-qubit polytope $\Psi(\mathbb{P}(\mathcal{H}))$ is a $4$-dimensional convex polytope spanned by 12 vertices (see
appendix for the proof and the list of vertcies). The dimension of the reduced space in the interior of $\Psi(\mathbb{P}(\mathcal{H}))$ is $\textrm{dim}M_{\alpha_{gen}}=14$ (see formula (\ref{eq:fiber-interior-1})). In figure \ref{fig:Three-kinds-of} the above mentioned three different parts of the boundary are shown. In particular, in figure \ref{fig:Three-kinds-of}(a) we see that for $3$-dimensional face of $\Psi(\mathbb{P}(\mathcal{H}))$ corresponding to three qubits, $\mathrm{dim}M_{\alpha}=2$ in the interior and $\mathrm{dim}M_{\alpha}=0$ on the boundary which agrees with results of ref. \cite{SWK13}. On the other hand, inside the $3$-dimensional face shown in figure \ref{fig:Three-kinds-of}(b) corresponding to one of $\{\rho_{i}\}_{i=1}^{4}$ being
maximally mixed we have $\mathrm{dim}M_{\alpha}=12$. The boundary of this face contains: $2$-dimensional faces corresponding to
two of $\{\rho_{i}\}_{i=1}^{4}$ being maximally mixed - $\mathrm{dim}M_{\alpha}=10$, $1$-dimensional faces - three of
$\{\rho_{i}\}_{i=1}^{4}$ are maximally mixed and $\mathrm{dim}M_{\alpha}=8$, and finally, the vertex denoted by $v_{\mathrm{GHZ}}$
when all one-particle reduced density matrices are maximally mixed - $\mathrm{dim}M_{\alpha}=6$. Therefore, as mentioned in the introduction,
$\mathrm{dim}M_{\alpha_{b}}=\mathrm{dim}M_{\alpha_{gen}}-2k$. Finally, in figure \ref{fig:Three-kinds-of}(c) we see the
$3$-dimensional face of $\Psi(\mathbb{P}(\mathcal{H}))$ with $\mathrm{dim}M_{\alpha}=0$.

\begin{figure}[H]
\centering
\includegraphics[width=14cm]{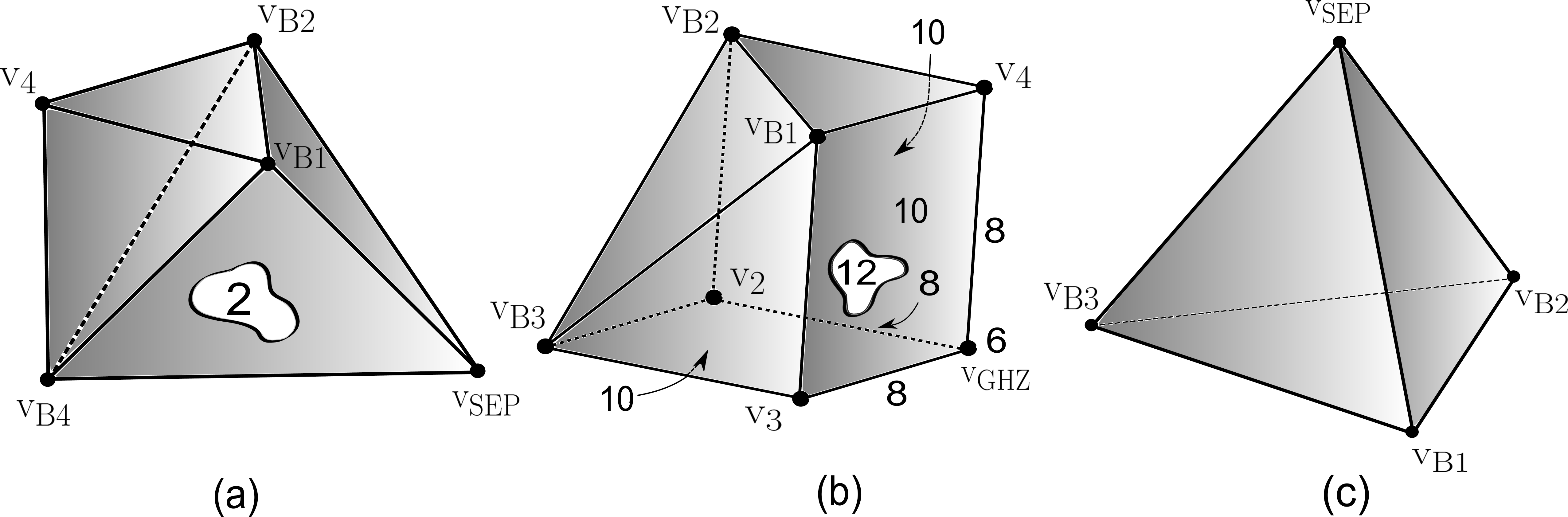}
\caption{\label{fig:Three-kinds-of}Three parts of the boundary of $\Psi(\mathbb{P}(\mathcal{H}))$. The numbers denote $\mathrm{dim}M_{\alpha}$.
If the number is missing, $\mathrm{dim}M_{\alpha}=0$. The vertices are defined in appendix.}
\end{figure}

\noindent In the next sections we show how to calculate $\mathrm{dim}M_{\alpha}$ for any point of $\Psi(\mathbb{P}(\mathcal{H}))$
for an arbitrary number of qubits.

\section{LU equivalence of qubits and the reduced spaces $M_{\alpha}$}
\label{sec:The-reduced-spaces}

 We start with a rigorous statement of the problem and the solution of its easy part. For a detailed description of symplecto-geometric methods in quantum information theory see for example \cite{CDKW12,SHK11,SK11,SWK13,WDGC12}.

Let $\mathcal{H}=\left(\mathbb{C}^{2}\right)^{\otimes L}$ be the $L$-qubit Hilbert space and denote by $\mathbb{P}(\mathcal{H})$
the corresponding complex projective space. It is known that $\mathbb{P}(\mathcal{H})$ is a symplectic manifold with the
Fubiny-Study symplectic form $\omega_{FS}$. The action of $K=SU(2)^{\times L}$ on $\mathbb{P}(\mathcal{H})$ is symplectic,
i.e. it preserves $\omega_{FS}$. Consequently, there is the {\it momentum map} for this action. In the considered setting
this map is given by
\[\mu:\mathbb{P}(\mathcal{H})\rightarrow i\mathfrak{k}\]
\[\mu([\phi])=\left\{ \rho_{1}([\phi])-\frac{1}{2}I,\ldots,\rho_{L}([\phi])-\frac{1}{2}I\right\} , \]
where $\mathfrak{k}$ is the Lie algebra of $K$, $\left\{ \rho_{l}([\phi])\right\} _{l=1}^{L}$ are the reduced one-qubit
density matrices, $I$ is $2\times2$ identity matrix and $i^{2}=-1$. The map $\mu$ is equivariant, i.e. for any $g\in K$
\begin{gather}\label{equvariance}
\mu([g\phi])=\left\{ \rho_{1}([g\phi])-\frac{1}{2}I,\ldots,\rho_{L}([g\phi])-\frac{1}{2}I\right\} = \\ =\left\{ g\left(\rho_{1}([\phi])-\frac{1}{2}I\right)g^{\ast},\ldots,g\left(\rho_{L}([\phi])-\frac{1}{2}I\right)g^{\ast}\right\}
=g\mu([\phi])g^{\ast}=\mathrm{Ad}_{g}\mu([\phi]),
\end{gather}
where $g^\ast$ is the Hermitian conjugate of $g$ and  $\mathrm{Ad}_{g}X:=gXg^{\ast}$, for any $X\in i\mathfrak{k}$. The equivariance of $\mu$ implies that $K$-orbits in $\mathbb{P}(\mathcal{H})$ are mapped onto adjoint orbits in $i\mathfrak{k}$.
Consequently, the necessary condition for two states $[\phi_{1,2}]\in\mathcal{H}$ to be on the same $K$-orbit in $\mathbb{P}(\mathcal{H})$
is $\mu(K.[\phi_{1}])=\mu(K.[\phi_{2}])$. On the other hand, adjoint orbits in $i\mathfrak{k}$ are determined by the spectra
of matrices $\left\{ \rho_{l}([\phi])\right\} _{l=1}^{L}$. The characteristic polynomial $w_{l}(\nu)$
for $\rho_{l}([\phi])$ reads

\begin{gather}
w_{l}(\nu)=\nu^{2}-\mathrm{tr}\left(\rho_{l}([\phi])-\frac{1}{2}I\right)\nu+\left(\left(\mathrm{tr}\left(\rho_{l}([\phi])-\frac{1}{2}I\right)\right)^2-\mathrm{tr}\left(\rho_{l}([\phi])-\frac{1}{2}I\right)^{2}\right).\label{eq:ch-poly}
\end{gather}
Using $\mathrm{tr}\left(\rho_{l}([\phi])\right)=1$, equation (\ref{eq:ch-poly}) reduces to

\begin{gather}
w_{l}(\nu)=\nu^{2}-\mathrm{tr}\left(\rho_{l}^{2}([\phi])\right)-\frac{1}{2}.\label{eq:ch-poly-1}
\end{gather}
One can thus say that the necessary condition for two states $[\phi_{1,2}]\in\mathbb{P}(\mathcal{H})$ to be on the same
$K$-orbit in $\mathbb{P}(\mathcal{H})$ is that

\begin{gather*}
\forall l\,\,\mathrm{tr}\left(\rho_{l}^{2}([\phi_{1}])\right)=\mathrm{tr}\left(\rho_{l}^{2}([\phi_{2}])\right).
\end{gather*}
Let $i\mathfrak{t}_{+}$ be the positive Weyl chamber in $i\mathfrak{k}$, i.e.

\begin{gather}
i\mathfrak{t}_{+}=\left\{ \left(\begin{array}{cc}
-\lambda_{1} & 0\\
0 & \lambda_{1}
\end{array}\right),\ldots,\left(\begin{array}{cc}
-\lambda_{L} & 0\\
0 & \lambda_{L}
\end{array}\right)\,:\,\lambda_{i}\in\mathbb{R}_{+}\right\} .\label{eq:chamber}
\end{gather}
We define the map $\Psi:\mathbb{P}(\mathcal{H})\rightarrow i\mathfrak{t}_{+}$ to be

\begin{gather*}
\Psi([\phi]):=\mu(K.[\phi])\cap i\mathfrak{t}_{+}=\left\{ \tilde{\rho}_{1}([\phi])-\frac{1}{2}I,\ldots,\tilde{\rho}_{L}([\phi])-\frac{1}{2}I\right\} ,
\end{gather*}
where each $\tilde{\rho}_{l}([\phi])$ is a diagonal $2\times2$ matrix whose diagonal elements are given by the increasingly
ordered spectrum $\sigma\left(\rho_{l}([\phi]\right)=\{p_{l},\,1-p_{l}\}$ of $\rho_{l}([\phi])$, that is, $\frac{1}{2}\geq p_{l}\geq0$. The shifted spectrum, i.e. the spectrum of $\rho_{l}([\phi])-\frac{1}{2}I$ is given by $\{-\lambda_{l},\lambda_{l}\}$,
where
\begin{gather*}
0\leq\lambda_{l}=\frac{1}{2}-p_{l}\leq\frac{1}{2}.
\end{gather*}
Under these assumptions the image $\Psi(\mathbb{P}(\mathcal{H}))$ is known to be a convex polytope, defined by the following
set of inequalities \cite{HSS03}
\begin{gather}
\forall_{l}\ 0\leq\lambda_{l}\leq\frac{1}{2},\,\,\mathrm{and}\,\,\left(\frac{1}{2}-\lambda_{l}\right)\leq\sum_{j\neq l}\left(\frac{1}{2}-\lambda_{j}\right)\,.\label{nier1}
\end{gather}
Note that the polytope $\Psi(\mathbb{P}(\mathcal{H}))$ gives 1-1 parametrization of $\mu(\mathbb{P}(\mathcal{H}))$, that is, of all available adjoint $K$-orbits. Moreover, for each point $\alpha\in\Psi(\mathbb{P}(\mathcal{H}))$ the set $\Psi^{-1}(\alpha)$
is a connected $K$-invariant stratified symplectic space \cite{MW99}. We will next briefly describe the structure of this space \cite{MW99}. Note that isotropy groups of points belonging to a fixed $K$-orbit are conjugated.
First, one can decompose $\mathbb{P}(\mathcal{H})$ according to the conjugacy classes of isotropy groups, i.e. into sets
of points, whose isotropies are conjugated. This decomposition divides $\mathbb{P}(\mathcal{H})$ into sets of orbits characterized by the same isotropy type. These sets are in general non-connected, thus performing the decomposition further into connected components, one obtains the stratification
\[
\mathbb{P}(\mathcal{H})=\cup S_{\nu}.
\]
There exists an unique highest dimensional stratum $S^{max}$, which is open and dense. Moreover, it can be shown that for
each $\alpha\in\Psi(\mathbb{P}(\mathcal{H}))$ the set $\mu^{-1}(\alpha)\cap S_{\nu}$ is smooth and that the decomposition
\begin{gather}
\mu^{-1}(\alpha)=\cup_{\nu}\mu^{-1}(\alpha)\cap S_{\nu},
\label{eq:strat}
\end{gather}
gives the stratification of $\mu^{-1}(\alpha)$. Let  $K_{\alpha}$ be the isotropy subgroup of $\alpha$ with respect to the adjoint action, i.e. $K_{\alpha}=\{g\in K:\,\mathrm{Ad}_{g}\alpha=\alpha\}$.  Because $\mu^{-1}(\alpha)$ is invariant to the action of $K_{\alpha}$, one has that the \textit{reduced space} $M_{\alpha}=\Psi^{-1}(\alpha)/K\cong\mu^{-1}(\alpha)/K_{\alpha}$.
The stratification given by (\ref{eq:strat}) induces the stratification of $M_{\alpha}$ in a natural way. What is more, every $M_{\alpha}$ also has an unique highest dimensional stratum $M_{\alpha}^{max}$. If $\mu^{-1}(\alpha)\cap S^{max}\neq\emptyset$ then
\[
M_{\alpha}^{max}=(\mu^{-1}(\alpha)\cap S^{max})/K.
\]
Therefore it is clear that the dimension of the highest dimensional stratum in $M_{\alpha}$, or more precisely $\mathrm{dim}M_{\alpha}+L$
is the number of $K$-invariant polynomials needed to decide LU equivalence of states satisfying $\Psi([\phi])=\alpha$. The problem of finding $\mathrm{dim}M_{\alpha}$ is essentially different for points $\alpha$ in the interior and on the boundary of $\Psi(\mathbb{P}(\mathcal{H}))$. Using (\ref{nier1}) the boundary of $\Psi(\mathbb{P}(\mathcal{H}))$ can be divided into three parts
\begin{description}
  \item[\textbf{Case 1}] $k$ of $\lambda_{l}$s are equal to $\frac{1}{2}$.
  \item[\textbf{Case 2}] At least one of the inequalities $\left(\frac{1}{2}-\lambda_{l}\right)\leq\sum_{j\neq l}\left(\frac{1}{2}-\lambda_{j}\right)$ is an equality.
  \item[\textbf{Case 3}] $k$ of $\lambda_{l}$s are equal to $0$.
\end{description}
By the permutation symmetry of inequalities (\ref{nier1}), it is enough to consider one example for each of these cases. In the remaining two paragraphs of this section we find $\mathrm{dim}M_\alpha$ for the interior and the boundary points satisfying the condition of the first case.

\subsection{Interior of $\Psi(\mathbb{P}(\mathcal{H}))$ }

For the points $\alpha$ in the interior of the polytope $\Psi(\mathbb{P}(\mathcal{H}))$ the calculation of $\mathrm{dim}M_{\alpha}$
turns out to be rather straightforward.

To begin, let us denote by $\mathbb{P}(\mathcal{H})^{max}$ the union of $K$-orbits of maximal dimension in $\mathbb{P}(\mathcal{H})$.
The principal isotropy theorem \cite{B72} implies that this set is connected, open and dense. In order to calculate the
maximal dimension of $K$-orbits in $\mathbb{P}(\mathcal{H})$ we use the following fact (see ref. \cite{SWK13})
\begin{proposition}
\label{dim-orbit}
\label{dim-orbit}Assume that $\mathrm{dim}\Psi(\mathbb{P}(\mathcal{H}))=\mathrm{dim}\mathfrak{t}_{+}$. Then a generic $K$-orbit
has dimension of the group $K$.
\end{proposition}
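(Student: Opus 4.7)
The plan is to use the fundamental property of the momentum map that relates the rank of $d\mu$ at a point to the dimension of the $K$-orbit through that point. For a Hamiltonian $K$-action, the image of $d\mu_{[\phi]}$ in $i\mathfrak{k}$ is the annihilator (under the Killing form identification) of the Lie algebra $\mathfrak{k}_{[\phi]}$ of the isotropy subgroup $K_{[\phi]}$. Consequently
\[
\mathrm{rank}(d\mu_{[\phi]}) = \dim \mathfrak{k} - \dim \mathfrak{k}_{[\phi]} = \dim K.[\phi].
\]
Thus I would first recall this identity, which reduces the statement to computing the generic rank of $\mu$, i.e. the dimension of $\mu(\mathbb{P}(\mathcal{H}))$.

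Second, I would relate $\dim \mu(\mathbb{P}(\mathcal{H}))$ to $\dim \Psi(\mathbb{P}(\mathcal{H}))$. Since $\Psi(\mathbb{P}(\mathcal{H})) = \mu(\mathbb{P}(\mathcal{H})) \cap i\mathfrak{t}_{+}$ parametrizes the set of adjoint $K$-orbits met by $\mu$, one has
\[
\dim \mu(\mathbb{P}(\mathcal{H})) = \dim \Psi(\mathbb{P}(\mathcal{H})) + \dim \mathcal{O}_{\mathrm{gen}},
\]
where $\mathcal{O}_{\mathrm{gen}}$ is the adjoint orbit through a generic element of $\Psi(\mathbb{P}(\mathcal{H}))$. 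Under the hypothesis $\dim \Psi(\mathbb{P}(\mathcal{H})) = \dim \mathfrak{t}_{+}$, a generic $\alpha \in \Psi(\mathbb{P}(\mathcal{H}))$ lies in the interior of the Weyl chamber, so its isotropy under $\mathrm{Ad}$ is the maximal torus $T$, giving $\dim \mathcal{O}_{\mathrm{gen}} = \dim K - \dim T$. Combining these with $\dim \mathfrak{t}_{+} = \dim T$, I get $\dim \mu(\mathbb{P}(\mathcal{H})) = \dim K$.

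Third, I would invoke the standard fact that for a smooth map between manifolds the dimension of the image equals the rank of the differential at a generic point. Together with the principal isotropy theorem (already cited in the paper), this yields $\dim K.[\phi] = \dim K$ for every $[\phi] \in \mathbb{P}(\mathcal{H})^{max}$, which is precisely the claim.

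The main obstacle, I expect, is the bookkeeping in the second step: one must argue that the generic point of $\Psi(\mathbb{P}(\mathcal{H}))$ really lies in the interior of $i\mathfrak{t}_{+}$ (so that its adjoint stabilizer is $T$, not something larger), and that the fiber of the quotient map $\mu(\mathbb{P}(\mathcal{H})) \to \Psi(\mathbb{P}(\mathcal{H}))$ is indeed a single adjoint orbit of the maximal dimension. The hypothesis $\dim \Psi(\mathbb{P}(\mathcal{H})) = \dim \mathfrak{t}_{+}$ is exactly what guarantees this, since a lower-dimensional polytope would be contained in the walls of the Weyl chamber, where stabilizers jump up and the dimension count above would fail.
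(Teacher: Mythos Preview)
The paper does not actually prove this proposition; it is stated as a fact with a reference to \cite{SWK13} and used without argument. Your proof is correct and is the standard one. The identity $\mathrm{rank}\,d\mu_{[\phi]}=\dim\mathfrak{k}-\dim\mathfrak{k}_{[\phi]}=\dim K.[\phi]$ that you invoke in the first step is precisely what the paper later records as Proposition~\ref{prop2}, and your dimension count in the second step is sound: the hypothesis $\dim\Psi(\mathbb{P}(\mathcal{H}))=\dim\mathfrak{t}_{+}$ forces a generic point of the polytope to lie in the open Weyl chamber, so the adjoint orbit through it is $K/T$, and one obtains $\dim\mu(\mathbb{P}(\mathcal{H}))=\dim\mathfrak{t}_{+}+(\dim K-\dim T)=\dim K$. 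The only remark is that your third step deserves one extra word of justification: $\mu$ is real-analytic (indeed polynomial in the coordinates of $\phi$), so the rank of $d\mu$ is generically equal to the dimension of the image, which is what makes the passage from $\dim\mu(\mathbb{P}(\mathcal{H}))$ back to the generic orbit dimension legitimate.
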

\noindent For $L$-qubits, $L\geq3$, one easily checks that the polytope $\Psi(\mathbb{P}(\mathcal{H}))$ given by inequalities
(\ref{nier1}), is $L$-dimensional. On the other hand, for the Weyl chamber defined by (\ref{eq:chamber}), we have $\mathrm{dim}\mathfrak{t}_{+}=L$.
Therefore, by proposition \ref{dim-orbit} the $K$-orbits belonging to $\mathbb{P}(\mathcal{H})^{max}$ have dimension of the group
$K=SU(2)^{\times L}$, i.e $3L$. Moreover, $\Psi(\mathbb{P}(\mathcal{H})^{max})$ contains the interior of the polytope
$\Psi(\mathbb{P}(\mathcal{H}))$ (see ref.\cite{HH96}) and for $\alpha$ in the interior of the polytope $\Psi(\mathbb{P}(\mathcal{H}))$
the set $\Psi^{-1}(\alpha)\cap\mathbb{P}(\mathcal{H})^{max}$ is the highest dimensional stratum of $\Psi^{-1}(\alpha)$
(see ref.\cite{MW99}). Hence
\begin{gather}
\mathrm{dim}M_{\alpha}=\mathrm{dim}(\Psi^{-1}(\alpha)/K)=\left(\mathrm{dim}\mathbb{P}(\mathcal{H})-\mathrm{dim}\Psi(\mathcal{H})\right)-\mathrm{dim}K=\nonumber \\
=\left(\left(2^{L+1}-2\right)-L\right)-3L=2^{L+1}-4L-2.\label{eq:fiber-interior-1}
\end{gather}
Note that already in the case of three qubits we have $\mathrm{dim}M_{\alpha}=2$, that is, one needs $3+2=5$, $K$-invariant polynomials
to decide LU equivalence of states whose spectra of reduced density matrices are the same and belong to the interior of $\Psi(\mathbb{P}(\mathcal{H}))$.
The exponential growth of $\mathrm{dim}M_{\alpha}$, which for large $L$ is of order $2^{L}$, can be seen as a usual statement
that the number of $K$-invariant polynomials needed to distinguish between generic $K$-orbits grows exponentially with
the number of particles.

\subsection{Calculation of $\mathrm{dim}M_{\alpha}$ for case 1}

\noindent The Case 1 is once again straightforward. To see this note that when the first $k$ of $\lambda_{l}$'s are equal to $\frac{1}{2}$
inequalities (\ref{nier1}) reduce to the inequalities for the polytope of $L-k$ qubits. Moreover, a generic state belonging
to $\Psi^{-1}(\alpha)$ is of the form $\phi_{1}\otimes\phi_{2}$ where $\phi_{1}$ is a $k$-qubit separable state and $\phi_{2}$
is an arbitrary state of $L-k$ qubits. Therefore, using (\ref{eq:fiber-interior-1}), we get

\begin{gather*}
\mathrm{dim}M_{\alpha}=\left(\left(2^{L-k+1}-2\right)-(L-k)\right)-3(L-k)=2^{L-k+1}-4(L-k)-2.
\end{gather*}
In the next two sections we find $\mathrm{dim}M_{\alpha}$ for cases 2 and 3.

\section{Calculation of $\mathrm{dim}M_{\alpha}$ for case 2 }
\label{sec:case2}

In this section we show that $\mathrm{dim}M_{\alpha}=0$ for $\alpha$'s satisfying assumptions of case 2.

Assume that one of the inequalities $\left(\frac{1}{2}-\lambda_{l}\right)\leq\sum_{j\neq l}\left(\frac{1}{2}-\lambda_{j}\right)$
is an equality, e.g.
\begin{gather}
-\lambda_{1}+\sum_{i=2}^{L}\lambda_{i}=\frac{1}{2}L-1.\label{eq:condition1-1-1}
\end{gather}
Using the fact that states mapped by $\Psi$ onto $\alpha$ are $K$-orbits through states mapped by $\mu$ onto $\alpha$, i.e. $\Psi^{-1}(\alpha)/K=\mu^{-1}(\alpha)/K_{\alpha}$, one can, if more convenient,  use map $\mu$ instead of $\Psi$ to calculate $\mathrm{dim}M_{\alpha}$. It turns out that this is the case for the considered $\alpha$'s, as we have the following:
\begin{proposition}
\label{prop1}
Let $\phi\in\mathcal{H}$ be such that $\mu([\phi])$ satisfies (\ref{eq:condition1-1-1}). Let $\xi=[\xi_{1},\ldots,\xi_{L}]$ be a vector perpendicular to the plane given by (\ref{eq:condition1-1-1}). Then $\phi$ is an eigenvector of $X=X_{1}\otimes I\otimes\ldots\otimes I+\ldots+I\otimes I\otimes\ldots\otimes X_{L}$,
where $X_{l}=\mathrm{diag}\{\xi_{l},-\xi_{l}\}$.
\end{proposition}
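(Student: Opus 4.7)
The plan is a maximum-principle argument for the Kirwan polytope: I recognise the face cut out by (\ref{eq:condition1-1-1}) as the one on which a particular linear functional is maximised, pull that functional back to a function $h$ on $\mathbb{P}(\mathcal{H})$, and extract the eigenvector condition from $dh|_{[\phi]}=0$.

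Concretely, the inequality in (\ref{nier1}) for $l=1$ reads $-\lambda_1+\sum_{j\geq 2}\lambda_j\leq\tfrac{L-2}{2}$, so (\ref{eq:condition1-1-1}) is the face of $\Psi(\mathbb{P}(\mathcal{H}))$ where the linear functional with coefficients $\xi=(-1,1,\dots,1)$ attains its maximum. Define
\begin{equation*}
h\colon\mathbb{P}(\mathcal{H})\to\mathbb{R},\qquad h([\phi])=-\lambda_1([\phi])+\sum_{l\geq 2}\lambda_l([\phi]).
\end{equation*}
By convexity $h\leq(L-2)/2$, with equality exactly on the $\Psi$-preimage of that face, so under the hypothesis of the proposition $[\phi]$ is a global maximum of $h$.

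For $\phi\in\mu^{-1}(\alpha)$ with $\alpha\in i\mathfrak{t}_+$, each $\rho_l-\tfrac12 I=\mathrm{diag}(-\lambda_l,\lambda_l)=-\lambda_l\,\sigma_z^{(l)}$. From $\lambda_l^{2}=\tfrac12\mathrm{tr}((\rho_l-\tfrac12 I)^{2})$ one then gets $2\lambda_l\,d\lambda_l=-\lambda_l\,\mathrm{tr}(\sigma_z^{(l)}\,d\rho_l)$. For a tangent $v\in T_{[\phi]}\mathbb{P}(\mathcal{H})$ lifted to $v\in\mathcal{H}$ with $\langle\phi,v\rangle=0$, differentiating $\rho_l=\mathrm{tr}_{\bar l}|\phi\rangle\langle\phi|$ yields $\mathrm{tr}(\sigma_z^{(l)}\,d\rho_l[v])=2\,\mathrm{Re}\langle\phi|\sigma_z^{(l)}|v\rangle$, hence $d\lambda_l[v]=-\mathrm{Re}\langle\phi|\sigma_z^{(l)}|v\rangle$. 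Since $X=\sum_l\xi_l\sigma_z^{(l)}$, assembling gives
\begin{equation*}
dh|_{[\phi]}[v]=-\mathrm{Re}\langle\phi|X|v\rangle=-\mathrm{Re}\langle X\phi,v\rangle.
\end{equation*}
Requiring this to vanish for every tangent $v$ and replacing $v$ by $iv$ (also tangent, since $\phi^\perp\subset\mathcal{H}$ is a complex subspace) to extract the imaginary part yields $\langle X\phi,v\rangle=0$ for every $v\perp\phi$, so $X\phi\in\mathbb{C}\phi$; as $X$ is Hermitian this is exactly the eigenvector claim.

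The main technical subtlety is that $\lambda_l$ is not smooth where it vanishes, i.e., where the $l$-th marginal is maximally mixed. In the interior of the face (\ref{eq:condition1-1-1}) every $\lambda_l>0$ and the computation above is valid verbatim. On the relative boundary of the face, where it meets case 1 or another instance of case 2, one may either approximate $\phi$ by a continuous path of $\mu^{-1}$-points with face-interior $\Psi$-images and pass to the limit in the closed relation $X\phi\in\mathbb{C}\phi$, or observe that any $\lambda_l=0$ lets one drop $X_l$ and reduce inductively to the analogous statement on the remaining qubits.
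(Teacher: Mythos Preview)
Your argument is correct and is a genuinely different route from the paper's. The paper invokes Proposition~\ref{prop2} (the image of $d\mu|_{[\phi]}$ equals the annihilator of the isotropy algebra $\mathfrak{k}_{[\phi]}$): since the diagonal part of $\mathrm{im}\,d\mu|_{[\phi]}$ lies in the tangent space to the polytope at $\alpha$, the normal element $X\in\mathfrak{t}$ annihilates the whole image and hence belongs to $\mathfrak{k}_{[\phi]}$, giving $X\phi=\lambda\phi$. You instead recognise the face as the maximum locus of the $K$-invariant scalar $h=\sum_l\xi_l\lambda_l$, compute $dh$ by hand via $\lambda_l^2=\tfrac12\mathrm{tr}((\rho_l-\tfrac12 I)^2)$, and read off the eigenvector condition from $dh|_{[\phi]}=0$. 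The two arguments are closely linked --- at points of $\mu^{-1}(i\mathfrak{t}_+)$ your $h$ coincides (up to sign and scale) with the linear Hamiltonian $\langle\mu,X\rangle$, whose critical points are exactly the $X$-fixed points --- but your version is more self-contained: it bypasses the black-box Proposition~\ref{prop2} and makes the convexity/maximum-principle mechanism explicit. The paper's version, on the other hand, directly exhibits $X$ as an element of the isotropy algebra, which is the conceptually cleaner statement if one later wants to analyse $\mathfrak{k}_{[\phi]}$ further. Your handling of the degenerate-spectrum boundary (continuity in the closed relation $X\phi\in\mathbb{C}\phi$, or induction after dropping a maximally mixed factor) is in fact slightly more complete than the paper's, which simply works under the standing assumption that all $\lambda_l$ are nondegenerate.
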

In order to prove this, we will use the fact characterizing the image of the differential of the momentum map \cite{GS84}:
\begin{proposition}
\label{prop2}
The image of $d\mu|_{[\phi]}:T_{[\phi]}M\rightarrow i\mathfrak{k}$ is equal to the annihilator of $\mathfrak{k}_{[\phi]}$,
the Lie algebra of the isotropy subgroup $K_{[\phi]}\subset K$.
\end{proposition}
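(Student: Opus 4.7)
The plan is to exhibit $[\phi]$ as a critical point of the Hamiltonian $\mu_{X}:\mathbb{P}(\mathcal{H})\to\mathbb{R}$ generated by $X$ viewed as the tuple $(X_{1},\ldots,X_{L})\in i\mathfrak{k}$; explicitly, $\mu_{X}([\psi]):=\sum_{l}\mathrm{tr}(X_{l}\rho_{l}([\psi]))=\langle\psi|X|\psi\rangle$, where on the right $X$ denotes the operator in the statement. Proposition \ref{prop2} will then close the argument: $d\mu_{X}|_{[\phi]}=0$ is the same as $\langle X,d\mu|_{[\phi]}(v)\rangle=0$ for every tangent $v$, i.e.\ $X\perp\mathrm{im}\,d\mu|_{[\phi]}$ in the trace pairing; by Proposition \ref{prop2} the right-hand side equals the annihilator of $\mathfrak{k}_{[\phi]}$, so this forces $X\in i\mathfrak{k}_{[\phi]}$. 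Equivalently, the one-parameter subgroup $\exp(-itX)\subset K$ stabilises $[\phi]$, which is exactly the condition that $\phi$ is an eigenvector of the operator $X$.

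To produce the critical point, introduce the auxiliary function $F:\mathbb{P}(\mathcal{H})\to\mathbb{R}$, $F([\psi]):=\sum_{l}\xi_{l}\lambda_{l}([\psi])$, where $\lambda_{l}([\psi])$ is the larger eigenvalue of the traceless block $\rho_{l}([\psi])-\tfrac{1}{2}I$. The polytope inequalities (\ref{nier1}) give the global bound $F\leq\tfrac{L}{2}-1$, while the hypothesis (\ref{eq:condition1-1-1}) yields equality $F([\phi])=\tfrac{L}{2}-1$, so $[\phi]$ is a global maximum of $F$. Assume first the generic situation $\lambda_{l}([\phi])>0$ for all $l$; each $\lambda_{l}$ is then a simple eigenvalue of a $2\times 2$ Hermitian matrix and hence smooth near $[\phi]$, and first-order perturbation theory gives $d\lambda_{l}|_{[\phi]}(v)=\mathrm{tr}(P_{l}^{+}\,d\mu_{l}|_{[\phi]}(v))$, with $P_{l}^{+}$ the rank-one projector onto the eigenline of $\lambda_{l}$. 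Because $\mu([\phi])$ is in the diagonal Weyl-chamber form one has $P_{l}^{+}=|1\rangle\langle 1|$, so after subtracting $\tfrac{1}{2}I$ (using tracelessness of $d\mu_{l}$) and multiplying by $\xi_{l}$ one obtains $\xi_{l}\,d\lambda_{l}=-\tfrac{1}{2}\,\mathrm{tr}(X_{l}\,d\mu_{l})$. Summation over $l$ yields the key identity $dF|_{[\phi]}=-\tfrac{1}{2}\,d\mu_{X}|_{[\phi]}$; the maximality of $F$ at $[\phi]$ forces $dF|_{[\phi]}=0$, hence $d\mu_{X}|_{[\phi]}=0$, which closes the argument in this case.

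The main obstacle will be the degenerate case where some $\lambda_{l}([\phi])=0$, so that $[\phi]$ additionally lies on a Case 3 facet and $F$ is only Lipschitz at $[\phi]$. The plan here is an approximation argument: pick a sequence $[\phi_{n}]\to[\phi]$ inside $\mu^{-1}(\text{face})$ with all $\lambda_{l}([\phi_{n}])>0$, apply the generic case above to obtain $X\phi_{n}=\alpha_{n}\phi_{n}$, and—since the $\alpha_{n}$ belong to the finite spectrum of the operator $X$—pass to a constant subsequence $\alpha_{n}=\alpha$ to conclude $X\phi=\alpha\phi$ by continuity of the eigenvector condition. In the residual subcase where no such approximating sequence exists, so that the face degenerates at $[\phi]$, the statement will instead be absorbed into the Case 3 analysis of the next section, where the extra vanishing constraints $\lambda_{l}=0$ are exploited by factoring off the maximally mixed qubits and reducing to an $L-k$ qubit subsystem in the spirit of the Case 1 factorization.
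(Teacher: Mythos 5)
Your proposal does not address the statement it is supposed to prove. Proposition~\ref{prop2} asserts the identity $\mathrm{im}\, d\mu|_{[\phi]}=\mathrm{Ann}(\mathfrak{k}_{[\phi]})$; what you have written is instead an argument that a state whose momentum image satisfies (\ref{eq:condition1-1-1}) is an eigenvector of the operator $X$ --- that is the content of Proposition~\ref{prop1}, not of Proposition~\ref{prop2}. Moreover, your argument explicitly invokes Proposition~\ref{prop2} as a known fact (``by Proposition~\ref{prop2} the right-hand side equals the annihilator of $\mathfrak{k}_{[\phi]}$''), so read as a proof of Proposition~\ref{prop2} it is circular: the statement to be proved is used as a lemma and is never itself established.

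For the record, Proposition~\ref{prop2} is the standard Guillemin--Sternberg fact which the paper does not prove but cites. The short argument runs as follows: the defining property of the momentum map gives $\langle d\mu|_{[\phi]}(v),\xi\rangle=\omega_{FS}|_{[\phi]}(\hat{\xi}_{[\phi]},v)$ for every $\xi\in\mathfrak{k}$ and every $v\in T_{[\phi]}\mathbb{P}(\mathcal{H})$, where $\hat{\xi}$ denotes the fundamental vector field of $\xi$ and the bracket is the invariant pairing identifying $i\mathfrak{k}$ with $\mathfrak{k}^{*}$. Hence $\xi$ annihilates $\mathrm{im}\,d\mu|_{[\phi]}$ if and only if $\omega_{FS}|_{[\phi]}(\hat{\xi}_{[\phi]},\cdot)\equiv 0$, which by nondegeneracy of $\omega_{FS}$ holds if and only if $\hat{\xi}_{[\phi]}=0$, i.e.\ $\xi\in\mathfrak{k}_{[\phi]}$. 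Thus $\mathrm{Ann}(\mathrm{im}\,d\mu|_{[\phi]})=\mathfrak{k}_{[\phi]}$, and taking annihilators once more in this finite-dimensional setting yields the claim. As a side remark: had the assigned statement been Proposition~\ref{prop1}, your Lagrange-multiplier/first-order-perturbation route (realizing $[\phi]$ as a maximizer of the linear functional $\xi\cdot\lambda$ on the polytope and deducing $d\mu_{X}|_{[\phi]}=0$) is a legitimate alternative to the paper's direct computation inside $\mathfrak{t}$, although your treatment of the degenerate case $\lambda_{l}([\phi])=0$ is left incomplete; but that is a different proposition.
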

\begin{proof}
(Proposition \ref{prop1}) The Lie algebra of $K$ is a real vector space equipped with the inner product given by $\bk AB=-\frac{1}{2}\mathrm{tr}(AB)$.
The matrices
\[\mathcal{X}_{k}=i\mathbb{I}\otimes\mathbb{I}\otimes...\otimes\sigma_{x}\otimes\mathbb{I}\otimes...\otimes\mathbb{I},\]
\[\mathcal{Y}_{k}=i\mathbb{I}\otimes\mathbb{I}\otimes...\otimes\sigma_{y}\otimes\mathbb{I}\otimes...\otimes\mathbb{I},\]
\[\mathcal{Z}_{k}=i\mathbb{I}\otimes\mathbb{I}\otimes...\otimes\sigma_{z}\otimes\mathbb{I}\otimes...\otimes\mathbb{I},\]
where $\sigma_{x,y,z}$ are Pauli matrices, form an orthogonal basis of $\mathfrak{k}$. The matrix representing the collection
$\left\{ \rho_{1}([\phi])-\frac{1}{2}I,\ldots,\rho_{L}([\phi])-\frac{1}{2}I\right\} $ is of the form:
\begin{gather*}
(\rho_{1}([\phi])-\frac{1}{2}I)\otimes I\otimes\ldots\otimes I+\ldots+I\otimes\ldots\otimes I\otimes(\rho_{L}([\phi])-\frac{1}{2}I).
\end{gather*}
Note that $d\mu|_{[\phi]}$ is the map that transports vectors tangent to $\mathbb{P}(\mathcal{H})$
at the point $[\phi]$ to the tangent space to $\mathfrak{k}$, $T_{\mu([\phi])}\mathfrak{k}\cong\mathfrak{k}$. Assume that $\alpha:=\mu([\phi])$ belongs to
$\Psi(\mathbb{P}(\mathcal{H}))$ and that spectra of the corresponding matrices $\{\rho_{l}([\phi])\}_{i=1}^{L}$ are nondegenerate.

In this case the Lie algebra $\mathfrak{k}_{\alpha}$ of the isotropy subgroup $K_{\alpha}\subset K$ is given by the diagonal
matrices in $\mathfrak{k}$, the set of which we denote by $\mathfrak{t}$. On the other hand by the equivariance of the momentum map (\ref{equvariance})
we have $\mathfrak{k}_{[\phi]}\subset\mathfrak{k}_{\alpha}=\mathfrak{t}$. In other words, states $[\phi]$ mapped by $\mu$
to $\alpha\in\Psi(\mathbb{P}(\mathcal{H}))$ with non-degenerate spectra of $\{\rho_{l}([\phi])\}_{i=1}^{L}$ can have isotropy
given by at most $\mathfrak{t}$. As the off-diagonal matrices in the image of $d\mu|_{[\phi]}$ are orthogonal to $\mathfrak{t}$,
by proposition \ref{prop2}, in order to find $\mathfrak{k}_{[\phi]}$ we need to find matrices in $\mathfrak{t}$ orthogonal to
(annihilated by) diagonal matrices from $d\mu|_{[\phi]}$. Note that since $\Psi(\mathbb{P}(\mathcal{H}))\subset\mathfrak{t}$
and $\mathrm{dim}\Psi(\mathbb{P}(\mathcal{H}))=\mathrm{dim}\mathfrak{t}$, for $\alpha$ inside $\Psi(\mathbb{P}(\mathcal{H}))$
the diagonal matrices in $T_{\mu([\phi])}\mathfrak{k}$, for $[\phi]\in\mathbb{P}(\mathcal{H})^{max}$, span the space $\mathfrak{t}$
and hence the isotropy $\mathfrak{k}_{[\phi]}=0$. On the other hand, for $\alpha$ satisfying (\ref{eq:condition1-1-1})
the space $T_{\mu([\phi])}\mathfrak{k}\cap\mathfrak{t}\neq\mathfrak{t}$. In order to find matrices in $\mathfrak{t}$ orthogonal
to $T_{\mu([\phi])}\mathfrak{k}\cap\mathfrak{t}$ note that any element of $\mathfrak{t}$ can be written as $\sum_{k=1}^{L}a_{k}\mathcal{Z}_{k}$.
The inner product of two matrices of this type reads

\[
\bk{\sum_{k=1}^{L}a_{k}\mathcal{Z}_{k}}{\sum_{k=l}^{L}b_{l}\mathcal{Z}_{l}}=-\frac{1}{2}\sum_{k=1}^{L}a_{k}\sum_{k=l}^{L}b_{l}\mathrm{tr}\left(\mathcal{Z}_{k}\mathcal{Z}_{l}\right)=\overline{a}\cdotp\overline{b}
\]
i.e. is equal to the standard inner product of vectors $\overline{a}=[a_1,\ldots,a_L]$ and $\overline{b}=[b_1,\ldots,b_L]$ in $\mathbb{R}^{L}$. The vectors
$\overline{a}$ corresponding to the diagonal matrices from the image of $d\mu|_{[\phi]}$ are tangent to $\Psi(\mathbb{P}(\mathcal{H}))$
at the point $\mu([\phi])$ satisfying (\ref{eq:condition1-1-1}). Therefore, if $\xi=[\xi_{1},\ldots,\xi_{L}]$ is a vector
perpendicular to the plane given by (\ref{eq:condition1-1-1}) then the corresponding operator $X=X_{1}\otimes I\otimes\ldots\otimes I+\ldots+I\otimes I\otimes\ldots\otimes X_{L}$,
where $X_{l}=\mathrm{diag}\{\xi_{l},-\xi_{l}\}$ is the element of the Lie algebra of the isotropy group $\mathfrak{k}_{[\phi]}$.
Consequently $X\phi=\lambda\phi$, for some $\lambda$.
\end{proof}
\noindent The vector $v=[-1,1,\ldots,1]$ is perpendicular to the plane given by (\ref{eq:condition1-1-1}). The corresponding
operator $X$ reads
\begin{gather}\label{matrix-X}
X=X_{1}\otimes I\otimes\ldots\otimes I+\ldots+I\otimes\ldots\otimes I\otimes X_{L},
\end{gather}
where $X_{1}=\mathrm{diag}\{-1,1\}$, $X_{2}=\ldots=X_{L}=\mathrm{diag}\{1,-1\}$. By proposition \ref{prop1} we need to consider
eigenspaces of $X$. We have the following:
\begin{proposition}
\label{The-spectrum}
The matrix $X$, defined by (\ref{matrix-X}), is a diagonal $2^{L}\times2^{L}$ matrix. The eigenvalues of $X$ are the integers chosen
from $-L$ to $L$ with the step $2$, that is $\sigma(X)=\{-L,\,-L+2,\ldots,L-2,L\}$. The multiplicity of eigenspace $\mathcal{H}_{-L+2k}$
is $\mathrm{dim}\mathcal{H}_{-L+2k}={L \choose k}$.
\end{proposition}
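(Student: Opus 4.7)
The plan is to exploit the fact that the three ingredients of $X$ all preserve diagonality and then reduce the eigenvalue count to a binary combinatorial identity.

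First, I would note that each $X_l$ is diagonal by definition, and the tensor product of diagonal matrices is diagonal, so each summand $I\otimes\cdots\otimes X_l\otimes\cdots\otimes I$ is diagonal; since the sum of diagonal matrices is diagonal, $X$ is diagonal. In particular its eigenvectors can be taken to be the standard product basis $\ket{i_1 i_2 \cdots i_L}$ with $i_l\in\{0,1\}$.

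Next, I would read off the diagonal entry at $\ket{i_1 \cdots i_L}$ directly from the definition of the $X_l$: the $l$-th summand contributes $(X_l)_{i_l i_l}$, so
\[
X\ket{i_1 \cdots i_L} = \left((X_1)_{i_1 i_1} + \sum_{l=2}^{L}(X_l)_{i_l i_l}\right)\ket{i_1 \cdots i_L}.
\]
Because $X_1=\mathrm{diag}\{-1,1\}$ while $X_l=\mathrm{diag}\{1,-1\}$ for $l\ge 2$, the cleanest move is to relabel $j_1 := 1-i_1$ and $j_l := i_l$ for $l\ge 2$; then for every $l$ the contribution of the $l$-th factor is $+1$ when $j_l=0$ and $-1$ when $j_l=1$. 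Consequently the eigenvalue reduces to $\sum_{l=1}^{L}(1-2j_l)=L-2|j|$ where $|j|=j_1+\cdots+j_L$.

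Finally, I would count: as $(j_1,\ldots,j_L)$ ranges over $\{0,1\}^L$, the quantity $L-2|j|$ takes the values $L, L-2, \ldots, -L+2, -L$, which after reindexing by $-L+2k$ gives $|j|=L-k$. The number of binary strings of weight $L-k$ is $\binom{L}{L-k}=\binom{L}{k}$, which is exactly the claimed multiplicity of $\mathcal{H}_{-L+2k}$. There is no real obstacle here — the only point requiring care is the asymmetric sign convention of $X_1$ versus $X_2,\ldots,X_L$, which is precisely what the substitution $j_1=1-i_1$ absorbs.
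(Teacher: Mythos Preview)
Your proof is correct and follows essentially the same route as the paper: both observe that $X$ is diagonal because each summand is, that the eigenvalues are sums of $\pm 1$'s (one from each factor), and that choosing $k$ of the $L$ signs to be $+1$ yields the eigenvalue $-L+2k$ with multiplicity $\binom{L}{k}$. Your relabeling $j_1=1-i_1$ is a slightly more explicit bookkeeping device for the asymmetric sign convention of $X_1$ than the paper uses, but the argument is the same.
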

\begin{proof}
The matrices $X_{l}$ are diagonal and their spectra are $\sigma(X_{l})=\{-1,1\}$. Consequently, the matrix $X$ is also
diagonal and its eigenvalues are sums of eigenvalues of $X_{l}$'s. One can easily verify that the eigenvalues of $X$ belong
to the set $\sigma(X)=\{-L,\,-L+2,\ldots,L-2,L\}$. To see this note that the eigenvalue $-L+2k$ arises as a sum of eigenvalues
of $\{X_{l}\}_{l=1}^{L}$, where $k$ out of $L$ eigenvalues of $X_{l}$'s are positive ($+1$) and $L-k$ negative ($-1$).
Therefore the multiplicity of $\mathcal{H}_{-L+2k}$ is $\mathrm{dim}\mathcal{H}_{-L+2k}={L \choose k}$.
\end{proof}
\noindent As a direct consequence of proposition \ref{The-spectrum} we need to consider $L+1$ eigenspaces of $X$. In the following we describe the structure of these spaces and show that only one of them, that is, $\mathcal{H}_{-L+2}$ contains states
$[\phi]$ for which $\mu([\phi])$ consists of diagonal matrices whose diagonal elements satisfy (\ref{eq:condition1-1-1}).
The result is obtained in two steps. First in proposition \ref{states-form} we determine $\mathcal{H}_{-L+2k}$ and show that condition
(\ref{eq:condition1-1-1}) is not satisfied for $k\in\{0,\ldots,L\}\setminus\{1\}$. Then in proposition \ref{k=1} we prove
that for $\mathcal{H}_{-L+2}$ condition (\ref{eq:condition1-1-1}) is satisfied.

Denote by $D_{k}^{L}$ the subspace of $\left(\mathbb{C}^{2}\right)^{\otimes L}$ spanned by separable states of $L$ qubits
such that $k$ out of $L$ qubits are in the ground state $\ket 0$ and the remaining $L-k$ qubits are in the excited state
$\ket 1$, for example, $D_{2}^{3}=\mathrm{Span}_{\mathbb{C}}\{|001\rangle,|010\rangle,|100\rangle\}$. Assume that $D_{k}^{L}=\{0\}$
if $k>L$.
\begin{proposition}
\label{states-form}States which belong to eigenspace $\mathcal{H}_{-L+2k}$ are of the form $\phi=p_{0}\ket 0\otimes\psi_{1}+p_{1}\ket 1\otimes\psi_{2}$,
where $\psi_{1}\in D_{k}^{L-1}$ and $\psi_{2}\in D_{k-1}^{L-1}$. The reduced one-qubit density matrices for any $\phi\in\mathcal{H}_{-L+2k}$
are diagonal. For any $\phi\in\mathcal{H}_{-L+2k}$ condition (\ref{eq:condition1-1-1}) is equivalent to $(L-k-1)||\phi||^{2}=L-2$. \end{proposition}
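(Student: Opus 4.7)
The plan is to establish the three parts of the proposition in sequence. First, for the explicit description of the eigenspace, the crucial observation is that every $X_l$ is diagonal in the computational basis, so $X$ itself is diagonal and $\mathcal{H}_{-L+2k}$ is spanned by computational basis vectors. The eigenvalue of $\ket{e_1\cdots e_L}$ under $X$ equals $(2e_1-1)+\sum_{l\ge 2}(1-2e_l)$, using the sign conventions from proposition~\ref{The-spectrum}. Setting this equal to $-L+2k$ and rearranging yields $\sum_{l\ge 2}e_l=e_1+L-k-1$. When $e_1=0$ this forces exactly $k$ zeros among $e_2,\ldots,e_L$, so the tail factor lies in $D_k^{L-1}$; when $e_1=1$ exactly $k-1$ zeros are required and the tail factor lies in $D_{k-1}^{L-1}$. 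Taking linear combinations yields the claimed decomposition.

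Second, diagonality of $\rho_l$ follows from a bit-flip argument. The off-diagonal entry $(\rho_l)_{0,1}$ is proportional to a sum of products $c_{\vec e}\overline{c_{\vec e\,'}}$ where the multi-indices $\vec e$ and $\vec e\,'$ differ only in qubit $l$. But flipping a single qubit changes the $X$-eigenvalue by exactly $\pm 2$, so within $\mathcal{H}_{-L+2k}$ at most one of any such pair has a nonzero coefficient in $\phi$. Consequently the entire off-diagonal sum vanishes.

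The third assertion is the technical core. I would compute $\bk{\phi}{X\phi}$ in two ways. Using the eigenvalue equation one has $\bk{\phi}{X\phi}=(-L+2k)\|\phi\|^2$; on the other hand, writing $X$ as the sum of one-body operators $X_l$ acting on the respective qubits and using the unnormalized reduced density matrices $R_l=\mathrm{diag}\{A_l,\|\phi\|^2-A_l\}$ yields $\sum_l\tr(X_l R_l)$, whose summands are $\|\phi\|^2-2A_1$ for $l=1$ and $2A_l-\|\phi\|^2$ for $l\ge 2$. Equating these two expressions gives the automatic linear relation $\sum_{l\ge 2}A_l-A_1=(k-1)\|\phi\|^2$. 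The final step is to substitute the diagonal entries of the normalized reduced density matrices $\rho_l=R_l/\|\phi\|^2$ into the shifted spectra appearing in (\ref{eq:condition1-1-1}) and apply this relation; the hard part will be the sign bookkeeping when translating between the signed coordinates $\tfrac{1}{2}-a_l$ on $\mathfrak{t}$ and the Weyl-chamber parameters $\lambda_l\ge 0$ that enter (\ref{eq:condition1-1-1}). Once those signs are tracked carefully the condition should collapse to $(L-k-1)\|\phi\|^2=L-2$.
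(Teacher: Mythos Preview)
Your argument is correct. Parts 1 and 2 coincide with the paper's reasoning: the paper also identifies the eigenbasis of $X$ with the computational basis and splits on the value of the first bit, and it declares the diagonality of $\rho_l$ ``straightforward'' without giving your bit-flip argument, so your treatment is in fact more explicit there.

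The genuine difference is in Part 3. The paper proceeds combinatorially: writing $\phi=\sum a_l\ket{0}\otimes e_l+\sum b_l\ket{1}\otimes f_l$, it computes $\sum_{l\ge 2}\lambda_l$ by counting, for each $l$, how many of the basis vectors $e_j,f_j$ have the $l$-th qubit equal to $\ket 1$, obtaining $(L-k-1)\sum|a_i|^2+(L-k)\sum|b_i|^2-\tfrac{L-1}{2}$, and then combines this with $\lambda_1=\sum|b_l|^2-\tfrac12$ to reach $(L-k-1)\|\phi\|^2=L-2$. Your route, taking $\langle\phi|X\phi\rangle$ two ways to obtain the identity $\sum_{l\ge 2}A_l-A_1=(k-1)\|\phi\|^2$, is cleaner and sidesteps the counting; once you adopt the paper's convention $\lambda_l=(\|\phi\|^2-A_l)-\tfrac12$ for every $l$ (i.e.\ identify $\lambda_l$ with a fixed diagonal entry rather than the absolute value), substitution into $-\lambda_1+\sum_{l\ge 2}\lambda_l$ gives $(L-k-1)\|\phi\|^2-\tfrac{L-2}{2}$ immediately via your identity, and setting this equal to $\tfrac{L}{2}-1$ yields the claim. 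Your caution about the sign bookkeeping is well placed---the paper itself is somewhat loose on this point (it fixes $\lambda_l$ as a specific entry rather than the Weyl-chamber parameter, and checks the ordering only in the subsequent Proposition~\ref{k=1})---so the same convention resolves it for you as well.
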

\begin{proof}
We first determine vectors spanning eigenspace $\mathcal{H}_{-L+2k}$ of $X$. The eigenvalue $-L+2k$ arises as the sum
of eigenvalues of $\{X_{l}\}_{l=1}^{L}$ with $k$ out of $L$ eigenvalues equal to $(+1)$ and $L-k$ equal to $(-1)$.
Note, however, that matrix $X_{1}=\mathrm{diag}\{-1,1\}$, whereas $X_{2}=\ldots=X_{L}=\mathrm{diag}\{1,-1\}$. Therefore,
the eigenvalue $-L+2k$ corresponds to separable states with either the first qubit in the ground state $\ket 0$ and the
remaining $L-1$ qubits in a state from $D_{k}^{L-1}$or with the first qubit in the excited state $\ket 1$ and the remaining
$L-1$ qubits in a state from $D_{k-1}^{L-1}$. Thus the generic state belonging to $\mathcal{H}_{-L+2k}$ can be written
as
\begin{gather*}
\phi=\sum_{l=1}^{{L-1 \choose k}}a_{l}\ket 0\otimes e_{l}+\sum_{l=1}^{{L-1 \choose k-1}}b_{l}\ket 1\otimes f_{l},
\end{gather*}
where $\{e_{i}\}$ and $\{f_{i}\}$ are the separable states spanning $D_{k}^{L-1}$ and $D_{k-1}^{L-1}$ respectively. It
is straightforward to see that the reduced one-qubit density matrices of $\phi$ are diagonal. The first one is of the form:

\[
\rho_{1}([\phi])-\frac{1}{2}I=\left(\begin{array}{cc}
\sum_{l=1}^{{L-1 \choose k}}|a_{l}|^{2}-\frac{1}{2} & 0\\
0 & \sum_{l=1}^{{L-1 \choose k-1}}|b_{l}|^{2}-\frac{1}{2}
\end{array}\right),
\]
and hence
\begin{gather}
\lambda_{1}=\sum_{l=1}^{{L-1 \choose k-1}}|b_{l}|^{2}-\frac{1}{2}.
\label{eq:lambda1}
\end{gather}
We now show that (\ref{eq:condition1-1-1}) is equivalent to $(L-k-1)||\phi||^{2}=L-2$. As matrices $\rho_{l}([\phi])$
are diagonal, the $\left(\rho_{l}([\phi])\right)_{11}$ entry of each $\rho_{l}$, which is equal to $\lambda_{l}+\frac{1}{2}$,
is the sum of $|b_{j}|^{2}$ and $|a_{j}|^{2}$ coefficients corresponding to vectors with the $l$-th qubit in the excited
state $\ket 1$. Consequently, in the sum $\sum_{l=2}^{L}\left(\rho_{l}([\phi])\right)_{11}$, each $|b_{j}|^{2}$ coefficient
occurs $L-k$ times and each $|a_{j}|^{2}$ coefficient $L-k-1$ times. Therefore,
\begin{gather}
\sum_{l=2}^{L}\lambda_{l}=(L-k-1)\sum_{i=1}^{{L-1 \choose k}}|a_{i}|^{2}+(L-k)\sum_{i=1}^{{L-1 \choose k-1}}|b_{i}|^{2}-\frac{1}{2}(L-1)=\label{eq:sum}\\
=(L-k)||\phi||^{2}-\sum_{i=1}^{{L-1 \choose k}}|a_{i}|^{2}-\frac{1}{2}(L-1).\nonumber
\end{gather}
Using (\ref{eq:lambda1})

\begin{gather*}
-\lambda_{1}+\sum_{i=2}^{L}\lambda_{i}=(L-k-1)||\phi||^{2}-\frac{1}{2}L+1.
\end{gather*}
Hence equation (\ref{eq:condition1-1-1}) reads

\begin{gather*}
(L-k-1)||\phi||^{2}=L-2.
\end{gather*}

\end{proof}

\noindent Using fact \ref{states-form} one easily finds that for normalized state, i.e. when $||\phi||=1$ condition (\ref{eq:condition1-1-1})
can be satisfied only when $k=1$. The following proposition ensures that indeed this is the case.
\begin{proposition}
\label{k=1}The reduced one-qubit density matrices of states $\phi\in\mathcal{H}_{-L+2}$ are diagonal and satisfy
condition (\ref{eq:condition1-1-1}).
\end{proposition}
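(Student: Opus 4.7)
The plan is to derive Proposition \ref{k=1} as a direct corollary of Proposition \ref{states-form}, specialized to $k=1$. No new machinery is required.

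First, I would invoke Proposition \ref{states-form} with $k=1$. That proposition already asserts the diagonality of the reduced one-qubit density matrices for any $\phi\in\mathcal{H}_{-L+2k}$, so in particular for $\phi\in\mathcal{H}_{-L+2}$ the matrices $\rho_l([\phi])$ are diagonal. This settles the first half of the statement with no additional computation.

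Next, I would use the equivalence supplied by Proposition \ref{states-form}: condition (\ref{eq:condition1-1-1}) for $\phi\in\mathcal{H}_{-L+2k}$ is equivalent to the scalar identity
\begin{gather*}
(L-k-1)\|\phi\|^{2}=L-2.
\end{gather*}
Plugging in $k=1$ reduces this to $(L-2)\|\phi\|^{2}=L-2$. Since the states $[\phi]\in\mathbb{P}(\mathcal{H})$ are normalized, $\|\phi\|^{2}=1$, so for every $L\geq 3$ the identity holds automatically; the case $L=2$ is trivial because both sides vanish. Hence (\ref{eq:condition1-1-1}) is satisfied for every normalized $\phi\in\mathcal{H}_{-L+2}$.

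The only thing worth noting as a sanity check, rather than an obstacle, is that $k=1$ is the unique value in $\{0,1,\ldots,L\}$ for which normalized vectors in $\mathcal{H}_{-L+2k}$ lie on the wall (\ref{eq:condition1-1-1}): for $k\notin\{1,L-1\}$ the identity $(L-k-1)\|\phi\|^{2}=L-2$ together with $\|\phi\|=1$ forces $k=1$, and the boundary case $k=L-1$ requires $L=2$. This confirms that among the $L+1$ eigenspaces described in Proposition \ref{The-spectrum} only $\mathcal{H}_{-L+2}$ contributes states realizing the boundary condition, and so there really is no hard step: the content of Proposition \ref{k=1} is a one-line substitution into the formula established in Proposition \ref{states-form}.
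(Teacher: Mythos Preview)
Your argument is correct: Proposition~\ref{k=1} is indeed an immediate corollary of Proposition~\ref{states-form} upon setting $k=1$ and using $\|\phi\|=1$, exactly as you say.

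The paper, however, does not proceed this way. Instead of invoking Proposition~\ref{states-form}, it redoes the computation from scratch for $k=1$: it writes out an explicit basis of $\mathcal{H}_{-L+2}$ (equation~(\ref{eq:almost W})), computes each $\rho_l([\phi])-\tfrac{1}{2}I$ directly, checks that the diagonal entries have the correct sign so that $\mu([\phi])$ lands in the positive Weyl chamber (the ``we can assume $|c_1|^2\geq\sum_{k\geq 2}|c_k|^2$'' step), and then verifies (\ref{eq:condition1-1-1}) by an explicit identity on the $|c_l|^2$. Your route is more economical and makes clear that Proposition~\ref{k=1} carries no new content beyond Proposition~\ref{states-form}. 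The paper's explicit computation buys two things: it displays the concrete form (\ref{eq:almost W}) of the states, which is immediately used afterward to identify them with the $K^{\mathbb C}$-orbit of the $W$ state, and it addresses the ordering issue (that the $(2,2)$ entries really are the nonnegative $\lambda_l$'s) more directly than Proposition~\ref{states-form} does. Your shortcut is valid for proving the proposition as stated, but if you were writing the full section you would still need the explicit description (\ref{eq:almost W}) for the subsequent sphericity argument.
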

\begin{proof}
The eigenspace $\mathcal{H}_{-L+2}$ is $L$-dimensional. Any vector $\phi\in\mathcal{H}_{-L+2}$ can be written as

\begin{gather}
\phi=c_{1}\ket 1\otimes\ket{1\ldots1}+c_{2}\ket 0\otimes\ket{01\ldots1}+c_{3}\ket 0\otimes\ket{101\ldots1}+\ldots+c_{L}\ket 0\otimes\ket{1\ldots110},\label{eq:almost W}
\end{gather}
that is, $\phi$ is a linear combination of separable state where all qubits are in the excited state and the states for
which the first and one additional qubits are in the ground state (while other are in the excited state). Assume that $\phi$
is normalized, i.e. $\sum_{k=1}^{L}|c_{i}|^{2}=1$. It is straightforward to calculate

\begin{gather*}
\rho_{l}([\phi])-\frac{1}{2}I=\left(\begin{array}{cc}
|c_{l}|^{2}-\frac{1}{2} & 0\\
0 & -\frac{1}{2}+\sum_{k\neq l}^{L}|c_{k}|^{2}
\end{array}\right),\,\,\, i\in\{2,\ldots,L\},\\
\rho_{1}([\phi])-\frac{1}{2}I=\left(\begin{array}{cc}
-\frac{1}{2}+\sum_{k=2}^{L}|c_{k}|^{2} & 0\\
0 & |c_{1}|^{2}-\frac{1}{2}
\end{array}\right).
\end{gather*}
Note that we can assume that $|c_{1}|^{2}\geq\sum_{k=2}^{L}|c_{k}|^{2}$. This means that for all $i\in\{1,\ldots,L\}$ we
have$\left(\rho_{l}([\phi])-\frac{1}{2}I\right)_{11}\leq0$ as required. It is also easy to see that condition (\ref{eq:condition1-1-1})
is equivalent to

\begin{gather*}
\sum_{k=2}^{L}|c_{k}|^{2}=\sum_{k=2}^{L}|c_{k}|^{2},
\end{gather*}
and is satisfied.
\end{proof}
\noindent By proposition \ref{k=1}, states mapped by $\Psi$ onto $\alpha$'s satisfying (\ref{eq:condition1-1-1}) belong
to $K$-orbits through $\phi\in\mathcal{H}_{-L+2}$. On the other hand states $\phi\in\mathcal{H}_{-L+2}$ are $K^{\mathbb{C}}$-equivalent
to $L$-qubit $W$ state, where $K^\mathbb{C}=SL(2,\mathbb{C})^{\times L}$ is complexification of $K=SU(2)^{\times L}$ and
\begin{gather*}
[W]=\ket{01\ldots1}+\ket{101\ldots1}+\ldots+\ket{1\ldots10}.
\end{gather*}
This can be easily seen by changing $\ket 0\leftrightarrow\ket 1$ on the first qubit of (\ref{eq:almost W}). It was shown
in ref.\cite{SWK13} that the variety $\overline{K^{\mathbb{C}}.[W]}$ is {\it spherical}, i.e. reduced spaces stemming from the restriction $\Psi|_{\overline{K^{\mathbb{C}}.[W]}}$
are zero-dimensional. Therefore:
\begin{theorem}
Let $\alpha\in\Psi(\mathbb{P}(\mathcal{H}))$ be such that at least one of the inequalities $\left(\frac{1}{2}-\lambda_{l}\right)\leq\sum_{j\neq l}\left(\frac{1}{2}-\lambda_{j}\right)$
is equality. Then $\mathrm{dim}M_{\alpha}=0$.
\end{theorem}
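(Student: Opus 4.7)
The plan is a straightforward assembly of Propositions \ref{prop1}, \ref{The-spectrum}, \ref{states-form}, and \ref{k=1}, capped off by the sphericity result for $\overline{K^{\mathbb{C}}.[W]}$ from \cite{SWK13}. By the permutation symmetry of the defining inequalities (\ref{nier1}), I may assume without loss of generality that the active equality is exactly (\ref{eq:condition1-1-1}). Because $\Psi^{-1}(\alpha)=K.\mu^{-1}(\alpha)$, describing $\mu^{-1}(\alpha)$ suffices, and the goal is to show that $\mu^{-1}(\alpha)$ lies inside a specific spherical variety, for which the reduced spaces are already known to be points.

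First, I would pin down the location of $\mu^{-1}(\alpha)$. Proposition \ref{prop1} tells me that every $\phi$ with $\mu([\phi])=\alpha$ is an eigenvector of the explicit operator $X$ of (\ref{matrix-X}), so by Proposition \ref{The-spectrum} the state $\phi$ lies in one of the $L+1$ eigenspaces $\mathcal{H}_{-L+2k}$. Proposition \ref{states-form} then reduces condition (\ref{eq:condition1-1-1}) to the scalar equation $(L-k-1)\|\phi\|^{2}=L-2$, and normalization $\|\phi\|=1$ forces $k=1$: for $k=0$ the right-hand side exceeds the left, for $k\geq 2$ the inequality reverses, and for $k=L-1$ the left-hand side is zero. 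Proposition \ref{k=1} then supplies the converse, showing that every normalized $\phi\in\mathcal{H}_{-L+2}$ does realize an $\alpha$ of the form (\ref{eq:condition1-1-1}). Hence $\mu^{-1}(\alpha)\subset\mathcal{H}_{-L+2}$.

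To finish, I would recognize $\mathcal{H}_{-L+2}$ as sitting inside $\overline{K^{\mathbb{C}}.[W]}$. The generic element of $\mathcal{H}_{-L+2}$ displayed in (\ref{eq:almost W}) is transformed into a linear combination of $W$-type vectors by applying the local swap $\ket{0}\leftrightarrow\ket{1}$ on the first qubit, which is an element of $K$. Since the linear span of $K^{\mathbb{C}}.[W]$ is the affine cone over $\overline{K^{\mathbb{C}}.[W]}$, this gives $\Psi^{-1}(\alpha)\subset\overline{K^{\mathbb{C}}.[W]}$ as a $K$-invariant subset. The sphericity result of \cite{SWK13} asserts that every reduced space attached to $\Psi|_{\overline{K^{\mathbb{C}}.[W]}}$ is zero-dimensional, and since $\Psi^{-1}(\alpha)$ is contained entirely in this spherical variety, the corresponding reduced space $M_{\alpha}$ is the same as the reduced space of the restriction, hence $\dim M_{\alpha}=0$.

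The main obstacle is therefore not internal to this theorem: all the geometric content is imported via the sphericity theorem of \cite{SWK13}. The self-contained part of the argument is a finite combinatorial check (only the eigenspace $\mathcal{H}_{-L+2}$ is compatible with normalization) together with the observation that a single local swap identifies $\mathcal{H}_{-L+2}$ with the span of $K^{\mathbb{C}}.[W]$. The only subtlety worth double-checking in a careful write-up is the equivariance of the swap, ensuring that the inclusion into $\overline{K^{\mathbb{C}}.[W]}$ respects the $K$-action used to define $M_{\alpha}$, so that the sphericity conclusion transfers cleanly.
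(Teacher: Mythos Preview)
Your proposal is correct and follows essentially the same path as the paper: reduce by symmetry to (\ref{eq:condition1-1-1}), use Propositions \ref{prop1}--\ref{k=1} to pin $\mu^{-1}(\alpha)$ inside $\mathcal{H}_{-L+2}$, swap the first qubit to recognize these as $W$-type states, and invoke the sphericity of $\overline{K^{\mathbb{C}}.[W]}$ from \cite{SWK13}. The one point to tighten is your justification that the swapped states lie in $\overline{K^{\mathbb{C}}.[W]}$: the linear span of $K^{\mathbb{C}}.[W]$ is \emph{not} the affine cone over the orbit closure, and the correct reason (which the paper uses) is that any state $\sum_i c_i\ket{e_i}$ in the $W$-basis with generic coefficients is sent to $[W]$ by diagonal $SL(2,\mathbb{C})^{\times L}$ rescalings, with degenerations landing in the closure.
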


\section{Calculation of $\mathrm{dim}M_{\alpha}$ for case 3}
\label{sec:case3}

As we showed in section \ref{sec:The-reduced-spaces}, for points $\alpha$ in the interior of $\Psi(\mathbb{P}(\mathcal{H}))$
the dimension of the reduced space is

\begin{gather*}
\mathrm{dim}M_{\alpha_{gen}}=2^{L+1}-4L-2.
\end{gather*}
In the following we show that for $\alpha=(\alpha_{1},\ldots,\alpha_{L})\in\Psi(\mathbb{P}(\mathcal{H}))$ with matrices
$\alpha_{1}=\ldots=\alpha_{k}=0$
\begin{gather}
\mathrm{dim}M_{\alpha}=\left(2^{L+1}-4L-2\right)-2k=\mathrm{dim}M_{\alpha_{gen}}-2k.\label{eq:case3}
\end{gather}
This means that the dimension $\mathrm{dim}M_{\alpha}$ drops by $2$ every time one of the reduced density matrices becomes
maximally mixed. The argument for this is based on the existence of {\it stable states} which we discuss first.

\subsection{Stable states}

In the following we assume that $\tilde{K}$ is any compact semisimple group acting in the symplectic way on the complex
projective space $\mathbb{P}(\mathcal{H})$, where $\mathcal{H}$ can be, for example, the Hilbert space of $L$ qubits.
We will denote by $\tilde{\mu}$ the corresponding momentum map and assume that $\tilde{\mu}^{-1}(0)\neq\emptyset$. Let
$\tilde{G}=\tilde{K}^{\mathbb{C}}$ be the complexification of $\tilde{K}$. Following ref. \cite{HH96} we denote
\begin{gather*}
X(\tilde{\mu})=\{[\phi]\in\mathbb{P}(\mathcal{H}):\,\overline{\tilde{G}.[\phi]}\cap\tilde{\mu}^{-1}(0)\neq\emptyset\}.
\end{gather*}
It is known that the set $X(\tilde{\mu})$ is an open dense subset of $\mathbb{P}(\mathcal{H})$. Moreover the set $G.\tilde{\mu}^{-1}(0)\subset X(\tilde{\mu})$
is also an open dense subset of $\mathbb{P}(\mathcal{H})$ (see ref.\cite{M77}). We will use the following terminology, typical for geometric invariant
theory:
\begin{enumerate}
\item $[\phi]$ is unstable iff $[\phi]\notin X(\tilde{\mu})$,
\item $[\phi]$ is semistable iff $[\phi]\in X(\tilde{\mu})$.
\end{enumerate}
Among semistable states we distinguish the class of stable states. By definition, a semistable state $[\phi]$ is stable if and only if $\tilde{\mu}([\phi])=0$ and $\mathrm{dim}\tilde{K}.[\phi]=\mathrm{dim}\tilde{K}$. Note that since for $[\phi]\in\tilde{\mu}^{-1}(0)$ one has $\mathrm{dim}\tilde{K}^{\mathbb{C}}.[\phi]=2\mathrm{dim}\tilde{K}.[\phi]$ (see ref.\cite{Kirwan-thesis}), the condition $\mathrm{dim}\tilde{K}.[\phi]=\mathrm{dim}\tilde{K}$ can be phrased as $\mathrm{dim}\tilde{G}.[\phi]=\mathrm{dim}\tilde{G}$. Remarkably, the existence of a stable state implies that almost all semistable states are stable, in particular almost all states in $\tilde{\mu}^{-1}(0)$ are stable \cite{HH96}. Note that since $\tilde{G}.\tilde{\mu}^{-1}(0)$
is open and dense in $\mathbb{P}(\mathcal{H})$ and a generic $\tilde{G}$-orbit in $\tilde{G}.\tilde{\mu}^{-1}(0)$ has
dimension $\mathrm{dim}\tilde{G}$ we get

\begin{gather}
\mathrm{dim}\mathbb{P}(\mathcal{H})=\mathrm{dim}\tilde{G}.\tilde{\mu}^{-1}(0)=\mathrm{dim}\tilde{G}+\mathrm{dim}\left(\tilde{G}.\tilde{\mu}^{-1}(0)\right)/\tilde{G}.\label{eq:relation}
\end{gather}
One of the central results in the geometric invariant theory reads \cite{Kirwan-thesis}

\begin{gather}
\left(\tilde{G}.\tilde{\mu}^{-1}(0)\right)/\tilde{G}=\tilde{\mu}^{-1}(0)/\tilde{K}.\label{eq:categorical}
\end{gather}
Hence, under the assumption of stable states existence and using (\ref{eq:relation}) and (\ref{eq:categorical}) we get

\begin{gather}
\tilde{\mu}^{-1}(0)/\tilde{K}=\mathrm{dim}\mathbb{P}(\mathcal{H})-2\mathrm{dim}\tilde{K}.\label{eq:dim-categorical}
\end{gather}
As we will see formula (\ref{eq:dim-categorical}) plays a major role in showing (\ref{eq:case3}).

\subsection{The strategy for showing (\ref{eq:case3})}

Let $K=K_{1}\times K_{2}$, where $K_{1}=SU(2)^{\times k}$ and $K_{2}=SU(2)^{\times(L-k)}$. We first consider the natural action
of $K_{1}$ on the first $k$ qubits in $\mathbb{P}(\mathcal{H})$, where $\mathcal{H}=\left(\mathbb{C}^{2}\right)^{\otimes L}$. The momentum map
$\mu_{1}$ for this action gives the first $k$ reduced density matrices. Therefore, $\mu_{1}^{-1}(0)$ consists of all states
with the first $k$ reduced density matrices maximally mixed, but no assumption is made on the spectra of the remaining $(L-k)$
matrices. In the following we assume that there exists a stable state for $K_{1}$-action on $\mathbb{P}(\mathcal{H})$ (see
lemma \ref{lemma-stable} for proof). Under this assumption and using formula (\ref{eq:dim-categorical}) the dimension of
$\mathrm{dim}\mu_{1}^{-1}(0)/K_{1}$ is

\begin{gather*}
\mathrm{dim}\mu_{1}^{-1}(0)/K_{1}=\mathrm{dim}\mathbb{P}(\mathcal{H})-2\mathrm{dim}K_{1}=2^{L+1}-6k-2.
\end{gather*}
Recall that $\mu_{1}^{-1}(0)/K_{1}$ is a stratified symplectic space and we consider the highest dimensional stratum which
is a symplectic manifold. Removing $K_{1}$ freedom does not affect $K_{2}$ action, i.e. the actions of $K_1$ and $K_2$ commute. Therefore, we can consider action of
$K_{2}$ on the highest dimensional stratum of $\mu_{1}^{-1}(0)/K_{1}$. The momentum map $\mu_{2}$ for $K_{2}$ action
on $\mu_{1}^{-1}(0)/K_{1}$ gives the remaining $L-k$ reduced density matrices. Moreover, using inequalities (\ref{nier1})
with $\lambda_{1}=\ldots=\lambda_{k}=0$ it is straightforward to see that the image of the corresponding map $\Psi_{2}$
is $L-k$ dimensional polytope. Using fact \ref{dim-orbit} and formula (\ref{eq:fiber-interior-1}), for a point inside
of this polytope, e.g. when $\lambda_{k+1},\ldots,\lambda_{L}\neq0$, the dimension of $\Psi_{2}$-fiber is

\begin{gather*}
\left(\left(\mathrm{dim}\mu_{1}^{-1}(0)/K_{1}\right)-\left(L-k\right)\right)-\mathrm{dim}K_{2}=\left(\left(2^{L+1}-6k-2\right)-\left(L-k\right)\right)-3(L-k)\\
=2^{L+1}-4L-2k-2.
\end{gather*}
But the $\Psi_{2}$-fiber is exactly the reduced space we look for, i.e. the one which corresponds to $\lambda_{1}=\ldots=\lambda_{k}=0$
and $\lambda_{k+1},\ldots,\lambda_{L}\neq0$. Therefore, as promised

\begin{gather*}
\mathrm{dim}M_{\alpha}=\mathrm{dim}M_{\alpha_{gen}}-2k=2^{L+1}-4L-2k-2.
\end{gather*}
In order to complete the above reasoning we now show that an appropriate stable state indeed exists.
\begin{lemma}
\label{lemma-stable}Let $K_{1}=SU(2)^{\times k}$, $k\leq L$. If $L\geq5$ then the $L$-qubit state
\begin{gather*}
[\phi]=\left(|0\ldots0\rangle+|1\ldots1\rangle\right)+\left(|110\ldots0\rangle+|001\ldots1\rangle\right)+\left(|1010\ldots0\rangle+|0101\ldots1\rangle\right)+\\
+\ldots+\left(|10\ldots01\rangle+|01\ldots10\rangle\right),
\end{gather*}
is $K_{1}$-stable. For $L=4$
\begin{gather*}
[\phi]=\alpha\left(|0000\rangle+|1111\rangle\right)+\left(|1100\rangle+|0011\rangle\right)+\left(|1010\rangle+|0101\rangle\right)+\\
+\left(|1001\rangle+|0110\rangle\right),\ \alpha\in\mathbb{R}\setminus\{1,-3\},
\end{gather*}
is $K_{1}$-stable.
\end{lemma}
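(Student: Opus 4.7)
The plan is to verify the two defining conditions of $K_1$-stability separately: (i) $\mu_1([\phi])=0$, i.e.\ the first $k$ reduced one-qubit density matrices equal $\tfrac{1}{2}I$, and (ii) the stabiliser of $[\phi]$ in $K_1$ is discrete, equivalently for every $X\in\mathfrak{k}_1$ the condition $X|\phi\rangle=i\gamma|\phi\rangle$ with $\gamma\in\mathbb{R}$ forces $X=0$.

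For (i), I would exploit two structural properties of $\mathrm{supp}(|\phi\rangle)$ in the computational basis. First, the listed basis vectors come in bit-flip-symmetric pairs (and the scalar $\alpha$ of the $L=4$ state multiplies one such pair symmetrically), so the two diagonal entries of each $\rho_\ell$ always accumulate equal total weight. Second, any two distinct supported basis vectors have Hamming distance at least two: for $L\ge 5$ the supported weights are $0,2,L-2,L$ and the weight-$2$ (respectively weight-$(L-2)$) vectors within the support differ in at least two positions, while for $L=4$ only weights $0,2,4$ appear with the same property. The distance condition kills every off-diagonal entry of each $\rho_\ell$, so $\rho_\ell=\tfrac{1}{2}I$ for every $\ell$ and hence in particular for the first $k$ qubits.

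For (ii), I would parametrise $X=iH$ with $H=\sum_{\ell=1}^k(a_\ell\sigma_x^{(\ell)}+b_\ell\sigma_y^{(\ell)}+c_\ell\sigma_z^{(\ell)})$ and real $a_\ell,b_\ell,c_\ell$, so the condition becomes $H|\phi\rangle=\gamma|\phi\rangle$ for some $\gamma\in\mathbb{R}$. Expand both sides in the computational basis and split the equations into in-support and out-of-support parts. Coefficients of in-support basis vectors involve only the $c_\ell$'s, since the distance-$\ge 2$ property prevents single-bit flips from landing back in $\mathrm{supp}(|\phi\rangle)$; matching at $|0\ldots 0\rangle$ and $|1\ldots 1\rangle$ yields $\sum c_\ell=\pm\gamma$, hence $\gamma=0$ and $\sum c_\ell=0$, and matching at each supported weight-two vector with $1$'s at positions $p,q$ gives $\sum_{\ell\in\{p,q\}\cap\{1,\ldots,k\}}c_\ell=0$; running these relations over all such pairs drives every $c_\ell$ to zero. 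With the $c_\ell$'s eliminated, only the $a_\ell,b_\ell$ survive, and matching coefficients of weight-$1$ and weight-$(L-1)$ basis vectors outside the support yields a homogeneous linear system. For $L\ge 5$, the equation at $|e_p\rangle$ with $p>k$ (when $k<L$) immediately gives $a_1=b_1=0$, while the equations at $|e_p\rangle$ with $2\le p\le k$ propagate $a_\ell=-a_1$ and $b_\ell=b_1$; the $k=L$ edge case is handled via the $|e_1\rangle$ equation and the nonvanishing factor $L-2$. For $L=4$ the same matching produces a system whose consistency determinant, viewed as a function of $\alpha$, factors as a polynomial vanishing exactly at $\alpha\in\{1,-3\}$, so excluding these values leaves only the trivial solution.

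The main obstacle is the combinatorial bookkeeping in the out-of-support step: for each $|v\rangle\notin\mathrm{supp}(|\phi\rangle)$ one must enumerate exactly which support vectors flip onto $v$ under each $\sigma_x^{(\ell)}$ and $\sigma_y^{(\ell)}$ and carry the $(-1)^{v_\ell}$ signs coming from $\sigma_y$. Once that enumeration is performed cleanly, the in-support reduction and the subsequent linear algebra on $(a_\ell,b_\ell)$ become routine, and the excluded values $\alpha=1$ and $\alpha=-3$ in the $L=4$ case arise naturally as the zeros of the resulting determinant.
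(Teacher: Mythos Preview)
Your approach is correct and is essentially the dual of the paper's argument. The paper proves the stronger statement of $K$-stability (for the full $K=SU(2)^{\times L}$) and then observes that $K$-stable implies $K_1$-stable; it does this in the complexified picture, writing down the $3L$ tangent vectors $E_{12}^{(\ell)}\phi$, $E_{21}^{(\ell)}\phi$, $H^{(\ell)}\phi$ and checking their linear independence (and orthogonality to $\phi$). You instead work directly with $K_1$ in the compact picture and compute the kernel of $\mathfrak{k}_1\to T_{[\phi]}\mathbb{P}(\mathcal{H})$, i.e.\ solve $H\phi=\gamma\phi$. Since $\dim K_1.[\phi]=\dim K_1$ is equivalent to $\mathfrak{k}_{1,[\phi]}=0$, these two computations are literally the rank versus nullity of the same linear map, so the combinatorial bookkeeping (which support vectors flip where) is identical. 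Your Hamming-distance argument for $\mu([\phi])=0$ is more explicit than the paper's one-line ``the construction ensures'' remark; the paper's reduction to the full $K$ has the mild advantage of handling all $k\le L$ simultaneously, whereas you carry $k$ as a parameter.

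One caution on the $L=4$ step: if you actually carry out the determinant computation for $k=4$, the $8\times 8$ system governing $(a_\ell,b_\ell)$ has block form $\bigl(\begin{smallmatrix}\alpha I & A\\ A & \alpha I\end{smallmatrix}\bigr)$ with $A=J-I$, giving determinant $(\alpha^2-9)(\alpha^2-1)^3$. So the values at which the stabiliser becomes positive-dimensional are $\alpha\in\{\pm 1,\pm 3\}$, not just $\{1,-3\}$ as stated. This does not affect the application (one only needs \emph{some} stable state to exist), but your sentence ``factors as a polynomial vanishing exactly at $\alpha\in\{1,-3\}$'' will not survive the computation as written.
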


\begin{proof}
We need to show that the first $k$ reduced density matrices of $[\phi]$ are maximally
mixed and that $\mathrm{dim}K_{1}.[\phi]=\mathrm{dim}K_{1}$. Note first that if the state $[\phi]$ is stable with respect
to $K=SU(2)^{\times L}$ action it is also stable with respect to $K_{1}\subset K$ action. Therefore, we will show that
$[\phi]$ is $K$-stable.

The state $[\phi]$ consists of $L\geq5$ pairs of separable states. In each pair the second vector is the first vector with
the swap $|0\rangle\leftrightarrow|1\rangle$ performed on every qubit. The first pair is GHZ state. In the remaining pairs
the first vector is such that the first and one additional qubits are in the excited state $\ket 1$ while the remaining
$L-2$ qubits are in the ground state $\ket 0$. The construction ensures that $\mu([\phi])=0$. What is left is to calculate
the dimension $\mathrm{dim}G.[\phi]$, where $G=K^{\mathbb{C}}=SL(2,\mathbb{C})^{\times L}$. This is equivalent to calculating
the dimension of the tangent space $T_{[\phi]}G.[\phi]$ which is generated by the action of Lie algebra $\mathfrak{g}=\mathfrak{sl}(2,\mathbb{C})^{\times L}$
on $[\phi]$. More precisely let

\begin{gather*}
E_{12}=\left(\begin{array}{cc}
0 & 1\\
0 & 0
\end{array}\right),\,\, E_{21}=\left(\begin{array}{cc}
0 & 0\\
1 & 0
\end{array}\right),\,\, H=\left(\begin{array}{cc}
1 & 0\\
0 & -1
\end{array}\right),
\end{gather*}
be the basis of $\mathfrak{sl}(2,\mathbb{C})$. Let
\begin{gather*}
E_{12}^{(l)}=\mathbb{I}\otimes\mathbb{I}\otimes\ldots\otimes E_{12}\otimes\mathbb{I}\otimes\ldots\otimes\mathbb{I},\\
E_{21}^{(l)}=\mathbb{I}\otimes\mathbb{I}\otimes\ldots\otimes E_{12}\otimes\mathbb{I}\otimes\ldots\otimes\mathbb{I},\\
H^{(l)}=\mathbb{I}\otimes\mathbb{I}\otimes\ldots\otimes H\otimes\mathbb{I}\otimes\ldots\otimes\mathbb{I}.
\end{gather*}
Our goal is to show that vectors $\{E_{12}^{(l)}\phi,E_{21}^{(l)}\phi,H^{(l)}\phi\}_{l=1}^{L}$ are linearly independent
and orthogonal to $\phi$. Denote by $\ket{00...0}\otimes\ket 1_{l}$ a separable state whose $l$-th qubit is in the excited
state $\ket 1$ and remaining qubits are in the ground state $\ket 0$ (e.g. a 5 - qubit state $\ket{00\ldots0}\otimes\ket 1_{2}=\ket{01000}$
and a 6 - qubit state $\ket{00\ldots0}\otimes\ket 1_{1}\otimes\ket 1_{4}\otimes\ket 1_{6}=\ket{100101}$). It is straightforward
to verify that
\begin{gather*}
E_{21}^{(1)}\phi=\ket{00\ldots0}\otimes\ket 1_{1}+\sum_{l'=2}^{L}\ket{11\ldots1}\otimes\ket 0_{l'},\\
E_{21}^{(l)}\phi=\ket{00\ldots0}\otimes\ket 1_{l}+\ket{11\ldots1}\otimes\ket 0_{1}+\sum_{l'\neq l}\ket{00\ldots0}\otimes\ket 1_{1}\otimes\ket 1_{l}\otimes\ket 1_{l'},\ l\geq2.
\end{gather*}
Moreover, the action of $E_{12}$ gives vectors that can be obtained from the above set of vectors by preforming the swap
operation on every qubit. The vectors obtained by the action of $E_{21}$ are linearly independent, because each vector
from this group contains a unique separable state of the form $\ket{00...0}\otimes\ket 1_{l}$ (an analogous argument can be
applied to the set of vectors obtained by the action of $E_{12}$). We next examine the linear independence of vectors from
groups $E_{12}^{(l)}$ and $E_{21}^{(l)}$. First, note that $E_{12}^{(l)}\phi,\ l\geq2$ are linearly independent from all
vectors $E_{21}^{(l)}\phi$, because they consist of separable states of the form $\ket{11\ldots1}\otimes\ket 0_{1}\otimes\ket 0_{l}\otimes\ket 0_{l'}$
that do not appear in the vectors $E_{21}^{(l)}\phi$ (it is not true in the 4 - qubit case, but we will return to this
problem later). The last thing to show is the linear independence of the vector $E_{12}^{(1)}\phi=\ket{11\ldots1}\otimes\ket 0_{1}+\sum_{l'=2}^{L}\ket{00\ldots0}\otimes\ket 1_{l'}$
from the vectors $E_{21}^{(l)}\phi,\ l\geq2$. Note that this vector is orthogonal to $E_{21}^{(1)}\phi$. Assume that $E_{12}^{(1)}\phi$ can be expressed as the linear combination of the remaining vectors

\begin{gather}
E_{12}^{(1)}\phi=\sum_{l=2}^{L}\lambda_{l}E_{21}^{(l)}\phi.\label{eq:linear}
\end{gather}
We will show that this leads to a contradiction. To this end, let us calculate the sum

\begin{gather*}
\sum_{l=2}^{L}E_{21}^{(l)}\phi=\sum_{l=2}^{L}\ket{00\ldots0}\otimes\ket 1_{l}+(L-1)\ket{11\ldots1}\otimes\ket 0_{1}+\sum_{l}\sum_{l'\neq l}\ket{00\ldots0}\otimes\ket 1_{1}\otimes\ket 1_{l}\otimes\ket 1_{l'}=\\ 
= E_{12}^{(1)}\phi+(L-2)\ket{11\ldots1}\otimes\ket 0_{1}+\sum_{l}\sum_{l'\neq l}\ket{00\ldots0}\otimes\ket 1_{1}\otimes\ket 1_{l}\otimes\ket 1_{l'}.
\end{gather*}
\noindent Using (\ref{eq:linear}) we get
\begin{equation}
\sum_{l=2}^{L}(1-\lambda_{l})E_{21}^{(l)}\phi=(L-2)\ket{11\ldots1}\otimes\ket 0_{1}+\sum_{l}\sum_{l'\neq l}\ket{00\ldots0}\otimes\ket 1_{1}\otimes\ket 1_{l}\otimes\ket 1_{l'} 
\label{eq:sum1}
\end{equation}
\noindent On the other hand,
\begin{gather}
\sum_{l=2}^{L}(1-\lambda_{l})E_{21}^{(l)}\phi=\sum_{l=2}^{L}(1-\lambda_{l})\ket{00\ldots0}\otimes\ket 1_{l}+
\sum_{l=2}^{L}(1-\lambda_{l})\ket{11\ldots1}\otimes\ket 0_{1}+ \nonumber \\
+\sum_{l}(1-\lambda_{l})\sum_{l'\neq l}\ket{00\ldots0}\otimes\ket 1_{1}\otimes\ket 1_{l}\otimes\ket 1_{l'} .
\label{eq:sum2}
\end{gather}
\noindent Thus, comparing the coefficients by $\ket{11\ldots1}\otimes\ket 0_{1}$ and $\ket{00\ldots0}\otimes\ket 1_{l}$ in equations (\ref{eq:sum1}) and (\ref{eq:sum2}),  we get
\begin{gather}\label{eq:cond1}
\sum_{l=2}^{L}(1-\lambda_{l})=L-2,\,\,\,\forall_l\ 1-\lambda_{l}=0
\end{gather}
This is a contradiction, because the first equation (\ref{eq:cond1}) is equivalent to $\sum_{l=2}^{L}\lambda_{l}=1$, which cannot be satistied by 
$\lambda_{l}=1$ for all $l$ (which is implied by the second equation (\ref{eq:cond1})).
 Clearly for any $l\in\{1,\ldots,L\}$ we also have $\bk{E_{12}^{(l)}}{\phi}=0=\bk{E_{21}^{(l)}\phi}{\phi}$.
We are left with vectors $H^{(l)}\phi$. It is straightforward to see that $\bk{H^{(l)}\phi}{\phi}=\bk{H^{(l)}\phi}{E_{21}^{(l)}\phi}=\bk{H^{(l)}\phi}{E_{12}^{(l)}\phi}=0$.
The matrix of coefficients for $\{H^{(l)}\phi\}$ is given by
\begin{gather*}
C^{\prime}=\left(\begin{array}{ccccccc}
1 & 1 & . & . & . & . & 1\\
1 & 1 & -1 & . & . & . & -1\\
1 & -1 & 1 & -1 & ... & -1\\
1 & -1 & -1 & 1 & -1 & ... & -1\\
. &  &  & . &  &  & .\\
. &  &  & . &  &  & .\\
1 & -1 & -1 & . & . & -1 & 1
\end{array}\right)
\end{gather*}
in the basis $\{(|00...0\rangle-|11...1\rangle),\,(-|110...0\rangle+|001...1\rangle),\,(-|1010...0\rangle+|0101...1\rangle),\ldots,(-|10...01\rangle+|01...10\rangle)\}$.
By direct calculation one checks that $\mathrm{det}C^{\prime}\neq0$. Therefore the dimension of $G.[\phi]$ is equal to
the dimension of $G$, which is $6L$ and $\phi$ is $K$-stable. In the case of 4 qubits we need to consider a slightly
different state

\begin{gather*}
[\phi]=\alpha\left(|0000\rangle+|1111\rangle\right)+\left(|1100\rangle+|0011\rangle\right)+\left(|1010\rangle+|0101\rangle\right)+\\
+\left(|1001\rangle+|0110\rangle\right),\ \alpha\in\mathbb{R}\setminus\{1,-3\}.
\end{gather*}
It can be shown by similar calculation that this state is $K$ - stable.
\end{proof}

\section{Summary}
Given spectra of one-qubit reduced density matrices, we found the formula for the minimal number of polynomials needed to decide LU equivalence of $L$-qubit pure states. As we showed this number is the same for spectra belonging to the interior of the polytope $\Psi(\mathbb{P}(\mathcal{H}))$. This is not the case on the boundary where the behaviour of $\mathrm{dim}M_\alpha$ is not uniform. In particular, for a large part of the boundary of $\Psi(\mathbb{P}(\mathcal{H}))$ we have $\mathrm{dim}M_{\alpha_{b}}=0$. We also observed
that for $\alpha_{b}\in\Psi(\mathbb{P}(\mathcal{H}))$ corresponding to the $\left\{ \rho_{l}([\phi])\right\} _{l=1}^{L}$ such that $k$ matrices are
maximally mixed $\mathrm{dim}M_{\alpha_{b}}=\mathrm{dim}M_{\alpha_{gen}}-2k$.

The methods used in this paper can be in principle applied to $L$-particle systems with an arbitrary finite-dimensional one-particle Hilbert spaces. The argument for points in the interior of $\Psi(\mathbb{P}(\mathcal{H}))$ can be used {\it mutatis mutandis} in this case. We note, however, that inequalities describing the polytope $\Psi(\mathbb{P}(\mathcal{H}))$ are much more complicated when $\mathcal{H}\neq\left(\mathbb{C}^2\right)^{\otimes L}$ (see ref. \cite{K04}) and therefore the problem for boundary points is of higher computational complexity. Nevertheless, one should expect that, similarly to the qubit case, there is a large part of the boundary characterized by $\mathrm{dim}M_\alpha = 0$.

\begin{acknowledgments}
We would like to thank Marek Ku\'s for many discussions on the geometric aspects of quantum correlations and his continuous
encouragement. This work is supported by Polish Ministry of Science and Higher Education Iuventus Plus grant no. IP2011048471.
\end{acknowledgments}

\appendix*
\section{Vertices of the polytope $\Psi(\mathbb{P}(\mathcal{H}))$ for $L$ qubits}

In order to find vertices of the polytope $\Psi(\mathbb{P}(\mathcal{H}))$, it is more convenient to view the inequalities
describing the polytope in terms of the minimal eigenvalues of the reduced one-qubit density matrices (remembering that the
shifted spectra are given by $\lambda_{i}=\frac{1}{2}-p_{i}$). As proven in \cite{HSS03} these inequalities read:
\begin{subequations}
\begin{eqnarray}
p_{i}\leq\sum_{j\neq i}p_{j}\,,\, i=1,\ldots,L\,,\label{eq:ntrivineq} \\
p_{i}\geq0\,,\, i=1,\ldots,L\,,\label{eq:trivineq1} \\
p_{i}\leq\frac{1}{2}\,,\, i=1,\ldots,L\,,\label{eq:trivineq2}
\end{eqnarray}
\end{subequations}
where each $p_{i}$ denotes the minimum eigenvalue of the reduced one-qubit density matrix describing $i$-th qubit, $\rho_{i}$.
The general algorithm of finding the vertices of the polytope given by a set of inequalities is to choose the set of $L$
of the inequalities, write them as a set of equations and check whether there exists a unique solution \cite{plytopes}.
If obtained solution satisfies the remaining $2L$ inequalities then it defines a vertex of $\Psi(\mathbb{P}(\mathcal{H}))$.
The following fact describes the structure of vertices of $\Psi(\mathbb{P}(\mathcal{H}))$
\begin{proposition}
\label{vertices}Vertices of the polytope $\Psi(\mathbb{P}(\mathcal{H}))$ in the case of $L$ qubits are given by equations
\begin{equation}
p_{i}=0\,\text{or }p_{i}=\frac{1}{2},\, i=1,\ldots,L\label{eq:vertices}
\end{equation}
such that the number of indecies $i$ for which $p_i=\frac{1}{2}$ belongs to $\{0,2,3,...,L\}$, that is,  $|\{i:p_{i}=\frac{1}{2}\}|\in\{0,2,3,...,L\}$.
\end{proposition}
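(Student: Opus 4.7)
The plan is to identify the vertices of the polytope cut out by (\ref{eq:ntrivineq})--(\ref{eq:trivineq2}) via the standard principle that a vertex of a polytope in $\mathbb{R}^L$ is the unique point at which $L$ of the defining inequalities become linearly independent active equalities. The argument naturally splits into two halves: (i) verifying that every candidate listed in (\ref{eq:vertices}) is in fact a vertex of $\Psi(\mathbb{P}(\mathcal{H}))$, and (ii) ruling out every other potential vertex.

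For the easy direction, I would first check that every $p$ with $p_i \in \{0, \tfrac{1}{2}\}$ and $|C| := |\{i : p_i = \tfrac{1}{2}\}| \in \{0, 2, 3, \ldots, L\}$ satisfies all defining inequalities. The trivial ones are immediate, and the non-trivial inequality at index $i$ reduces either to $0 \leq |C|/2$ (when $p_i = 0$) or to $\tfrac{1}{2} \leq (|C|-1)/2$, i.e. $|C| \geq 2$ (when $p_i = \tfrac{1}{2}$); the forbidden value $|C| = 1$ is exactly what is excluded by the non-trivial inequality at the unique index with $p_i = \tfrac{1}{2}$. Each such $p$ has $L$ trivial equalities ($p_i = 0$ or $p_i = \tfrac{1}{2}$, one per coordinate) active at it; these functionals are plainly linearly independent, so $p$ is automatically a vertex.

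For the hard direction I would argue by contradiction. Suppose $v = (p_1, \ldots, p_L)$ is a vertex with $D := \{k : 0 < p_k < \tfrac{1}{2}\}$ non-empty, and set $B = \{i : p_i = 0\}$, $C = \{i : p_i = \tfrac{1}{2}\}$, and $A = \{i : p_i = \sum_{j \neq i} p_j\}$. The trivial active constraints contribute the functionals $\{e_i^*\}_{i \in B \cup C}$, spanning the subspace of $(\mathbb{R}^L)^*$ vanishing on the $D$-coordinates; so for $v$ to be a vertex the images in $(\mathbb{R}^D)^*$ of the functionals $\{\mathbf{1}^* - 2 e_i^*\}_{i \in A}$ (restricted to the $D$-coordinates) must span all of $(\mathbb{R}^D)^*$. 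A direct linear-algebra computation splits the analysis into subcases according to the relative sizes of $|A \cap D|$ and $|D|$, and according to whether $A \setminus D$ is empty. In each subcase, plugging the active equations back in forces the common value $p^*$ of the $p_i$ for $i \in D$ to take a value in $\{0, \tfrac{1}{2}\}$ or to satisfy an equation like $(|D| - 2) p^* = -|C|/2$, making $p^* \leq 0$; either outcome contradicts $p_i \in (0, \tfrac{1}{2})$. For example, when $A \cap D = D$, subtracting the equations $p_i = \sum_{j \neq i} p_j$ pairwise gives $p_i \equiv p^*$ on $D$, and then $p^* = (|D|-1) p^* + |C|/2$ forces $p^* \leq 0$.

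The combinatorial bookkeeping in that last paragraph is the main obstacle: one must treat several subcases separately and be careful about the slightly degenerate rank behaviour of the family $\{\mathbf{1}^* - 2 e_i^*\}$ in dimension two, where the two functionals become proportional. Once the rank analysis is complete, assembling the two halves above yields the claimed classification of vertices.
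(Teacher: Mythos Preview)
Your proposal is correct and follows essentially the same strategy as the paper: characterize vertices as points where $L$ of the defining inequalities become linearly independent equalities, check that the listed points satisfy this, and rule out any others by a case analysis on the active constraints. The only difference is organizational: the paper parameterizes the cases by the index sets $I_1,I_2,I_3$ recording which of the three types of inequality are selected, whereas you parameterize by the set $D=\{k:0<p_k<\tfrac12\}$ of ``interior'' coordinates and analyze the rank of the restricted functionals $\{\mathbf{1}^*-2e_i^*\}$ on $(\mathbb{R}^D)^*$; your example subcase $A\cap D=D$ corresponds exactly to the paper's case $I_3=\emptyset$, and the degeneracy you flag at $|D|=2$ is the same obstruction the paper implicitly avoids by working with $L\geq 3$.
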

\begin{proof}
First note that each vertex of this form can be obtained by picking $|\{i:p_{i}=\frac{1}{2}\}|$ equations from \eqref{eq:trivineq2}
and $L-|\{i:p_{i}=\frac{1}{2}\}|$ linearly independent equations from \eqref{eq:trivineq1}. One easily checks that in each case $2L$ remaining
inequalities are trivially satisfied. Thus, points described by conditions \eqref{eq:vertices} are indeed vertices of $\Psi(\mathbb{P}(\mathcal{H}))$.
A single exceptional case occurs when $|\{i:p_{i}=\frac{1}{2}\}|=1$, e.g. $p_{1}=\frac{1}{2}$. Then by (\ref{eq:ntrivineq})
we get $p_{1}=\frac{1}{2}\leq0$, which is a contradiction. We will now show that these are all solutions. We achieve our
goal by considering all remaining possibilities of choosing $L$ out of $3L$ inequalities. Before we proceed let us
introduce some useful notation. Any choice of $L$ out of $3L$ inequalities (and turning them into equalities) is uniquely
given when the following three auxiliary sets are specified:
\[I_{1}=\{i:i^\mathrm{th}\,\text{inequality of the form \eqref{eq:ntrivineq} was chosen}\},\]
\[I_{2}=\{i:i^\mathrm{th}\,\text{inequality of the form \eqref{eq:trivineq1} was chosen}\},\]
\[I_{3}=\{i:i^\mathrm{th}\,\text{inequality of the form \eqref{eq:trivineq2} was chosen}\}.\]
We have $\left|I_{1}\right|+\left|I_{2}\right|+\left|I_{3}\right|=L$, where $\left|I\right|$ denotes the number of elements
of the finite set $I$. We have already covered cases when $I_{1}=\emptyset$. What remains to be checked are three possibilities:
$I_{2}=\emptyset$ or $I_{3}=\emptyset$ and the case when each $I_{i}$ is non-empty. Firstly, let us consider the case
when $I_{3}=\emptyset$. If additionally $I_{2}=\emptyset$, one can check by direct calculations that the set of $L$ equations
from $I_{1}$ gives $p_{i}=0$ for all $i$. Next, if the set $I_{2}$ is non-empty, i.e. we choose some number of $p_{i}$'s
equal to zero and combine these conditions with the equations from $I_{1}$, our problem either reduces to a problem analogous
to the previous case with $I_{2}=\emptyset$, or there exists such $i\in I_{1}$ that reads $p_{i}=\sum_{j\neq i}p_{j}$ with
$p_{i}=0$. Because all $p_{i}$'s are positive or equal to zero, we obtain in both cases that $p_{j}=0$ for all $j$. One
of the last things to check is the case when the only empty set is $I_{2}$, i.e. $I_{2}=\emptyset$, $I_{3}\neq\emptyset$
and $I_{1}\neq\emptyset$. If, in addition $\left|I_{3}\right|=k>1$, there exists such $i\in I_{1}$ that reads $p_{i}=\sum_{j\neq i}p_{j}=\sum_{j\notin I_{3}}p_{j}+\frac{k}{2}$.
This implies that $p_{i}>\frac{1}{2}$, which is a contradiction. Now, if $\left|I_{3}\right|=1$ from the same equation
we get that for all $j\notin I_{3}$ $p_{j}=0$ and $p_{i}=\frac{1}{2}$. This is a contradiction, because we assumed that
$\left|I_{3}\right|=1$. This argument remains also true in the case when all sets are non-empty.
\end{proof}
By the above fact, the number of the vertices of the polytope $\Psi(\mathbb{P}(\mathcal{H}))$ for $L$ qubits is
the number of ways to place $k$ out of $L$ $p_{i}$'s equal to $\frac{1}{2}$ and the remaining
$p_{i}$s equal to zero on $L$ places:

\[
V=\binom{L}{0}+\binom{L}{2}+...+\binom{L}{L}=\sum_{k=0}^{N}\binom{L}{k}-\binom{L}{1}=2^{L}-L.
\]
Moreover, to find the $L-1$ dimensional faces of the polytope, one has to change one of the inequalities (\ref{eq:ntrivineq}),
(\ref{eq:trivineq1}), (\ref{eq:trivineq2}) into equality and find vertices that satisfy this condition (the minimal
number of vertices sufficient to span such face is $L$). In this way, for $L\geq4$ qubits, one obtains that there
are $3L$ such faces.

\subsection{The 4 qubits example}

The four-qubit polytope is a 4-dimensional convex polytope spanned by 12 vertices (in terms of $\lambda$'s):

\begin{table}[H]
\begin{ruledtabular}
\begin{tabular}{ccc}
$|\{i:p_{i}=\frac{1}{2}\}|$ & Vertices in terms of $\lambda_i=\frac{1}{2}-p_i$ \\ \hline \hline
0 & $\mathrm{v}_{\mathrm{SEP}}=\{\textrm{diag}(-\frac{1}{2},\frac{1}{2}),\,\textrm{diag}(-\frac{1}{2},\frac{1}{2}),\,\textrm{diag}(-\frac{1}{2},\frac{1}{2}),\,\textrm{diag}(-\frac{1}{2},\frac{1}{2})\}$ \\ \hline
\multirow{6}{*}{2} & $\mathrm{v}_{\mathrm{B1}}=\{\textrm{diag}(0,0),\,\textrm{diag}(0,0),\,\textrm{diag}(-\frac{1}{2},\frac{1}{2}),\,\textrm{diag}(-\frac{1}{2},\frac{1}{2})\}$ \\
 &  $\mathrm{v}_{\mathrm{B2}}=\{\textrm{diag}(0,0),\,\textrm{diag}(-\frac{1}{2},\frac{1}{2}),\,\textrm{diag}(0,0),\,\textrm{diag}(-\frac{1}{2},\frac{1}{2})\}$ \\
 &  $\mathrm{v}_{\mathrm{B3}}=\{\textrm{diag}(0,0),\,\textrm{diag}(-\frac{1}{2},\frac{1}{2}),\,\textrm{diag}(-\frac{1}{2},\frac{1}{2}),\,\textrm{diag}(0,0)\}$ \\
 &  $\mathrm{v}_{\mathrm{B4}}=\{\textrm{diag}(-\frac{1}{2},\frac{1}{2}),\,\textrm{diag}(0,0),\,\textrm{diag}(0,0),\,\textrm{diag}(-\frac{1}{2},\frac{1}{2})\}$ \\
 &  $\mathrm{v}_{\mathrm{B5}}=\{\textrm{diag}(-\frac{1}{2},\frac{1}{2}),\,\textrm{diag}(0,0),\,\textrm{diag}(-\frac{1}{2},\frac{1}{2}),\,\textrm{diag}(0,0)\}$ \\
 &  $\mathrm{v}_{\mathrm{B6}}=\{\textrm{diag}(-\frac{1}{2},\frac{1}{2}),\,\textrm{diag}(-\frac{1}{2},\frac{1}{2}),\,\textrm{diag}(0,0),\,\textrm{diag}(0,0)\}$ \\ \hline
\multirow{4}{*}{3} & $\mathrm{v}_4=\{\textrm{diag}(0,0),\,\textrm{diag}(0,0),\,\textrm{diag}(0,0),\,\textrm{diag}(-\frac{1}{2},\frac{1}{2})\}$ \\
 &  $\mathrm{v}_3=\{\textrm{diag}(0,0),\,\textrm{diag}(0,0),\,\textrm{diag}(-\frac{1}{2},\frac{1}{2}),\,\textrm{diag}(0,0)\}$ \\
 &  $\mathrm{v}_2=\{\textrm{diag}(0,0),\,\textrm{diag}(-\frac{1}{2},\frac{1}{2}),\,\textrm{diag}(0,0),\,\textrm{diag}(0,0)\}$ \\
 &  $\mathrm{v}_1=\{\textrm{diag}(-\frac{1}{2},\frac{1}{2}),\,\textrm{diag}(0,0),\,\textrm{diag}(0,0),\,\textrm{diag}(0,0)\}$ \\
\hline
4 & $\mathrm{v}_{\mathrm{GHZ}}=\{\textrm{diag}(0,0),\,\textrm{diag}(0,0),\,\textrm{diag}(0,0),\,\textrm{diag}(0,0)\}$
\end{tabular}
\end{ruledtabular}
\caption{The vertices of the four-qubit polytope $\Psi(\mathbb{P}(\mathcal{H}))$.}
\label{vertices}
\end{table}

The 3-dimensional faces can be divided into three groups obtained by: (1) changing one inequality from (\ref{eq:ntrivineq})
into equality: the face is spanned by $\mathrm{v}_{\mathrm{SEP}}$ and three $\mathrm{v}_{\mathrm{B}_j}$ vertices, (2) choosing one $\lambda_{i}=0$:
the face is spanned by three $\mathrm{v}_{\mathrm{B}_j}$ vertices, three $\mathrm{v}_{j}$ vertices and $\mathrm{v}_{\mathrm{GHZ}}$, (3) choosing
one $\lambda_{i}=\frac{1}{2}$: the face is spanned by $\mathrm{v}_{\mathrm{SEP}}$, three $\mathrm{v}_{\mathrm{B}_j}$ vertices and one $\mathrm{v}_{j}$ vertex.


\begin{thebibliography}{99}
\bibitem{B72}Bredon, G. E.,
\newblock Introduction to compact transformation groups,
\newblock {Pure and Applied Math.} \textbf{46}, Academic Press, 1972.

\bibitem{CDKW12}Christandl, M., Doran, B., Kousidis, S., Walter, M.,
\newblock Eigenvalue Distributions of Reduced Density Matrices.
\newblock  arXiv:1204.0741, 2012.

\bibitem{plytopes}Gr\"{u}nbaum, B. Kaibel, Volker; Klee, Victor; Ziegler, G\"{u}nter M., eds.,
\newblock Convex polytopes (2nd ed.),
\newblock  New York \& London, 2003.

\bibitem{GS84}Guillemin, V., Sternberg, S.,
\newblock Convexity properties of the moment mapping,
\newblock  {Invent. Math.} \textbf{67}, 491513, 1982.

\bibitem{HH96}Heinzner, P., Huckleberry, A.,
\newblock K\"{a}hlerian potentials and convexity properties of the moment map,
\newblock  {Invent. Math.} \textbf{126}, 6584, 1996.

\bibitem{HSS03}Higuchi, A., Sudbery, A., and Szulc, J.,
\newblock One-qubit reduced states of a pure many-qubit state: polygon inequalities,
\newblock  {Phys. Rev. Lett.} \textbf{90}, 107902, 2003.

\bibitem{K84}Kirwan, F. C.,
\newblock Convexity properties of the moment mapping,
\newblock {Invent. Math.} \textbf{77}, 547552, 1984

\bibitem{Kirwan-thesis}Kirwan, F. C.,
\newblock Cohomology of Quotients in Symplectic and Algebraic Geometry,
\newblock  {Mathematical Notes} \textbf{31}, Princeton, NJ: Princeton University Press, 1982.

\bibitem{K04}Klyachko, A.
\newblock Quantum marginal problem and representation of the symmetry group,
\newblock {arXiv:quant-ph/0409113}

\bibitem{Kraus1}Kraus, B.,
\newblock Local unitary equivalence of multipartite pure states,
\newblock  {Phys. Rev. Lett.} \textbf{104}(2), 020504, 2010.

\bibitem{Kraus2}Kraus, B.,
\newblock Local unitary equivalence and entanglement of multipartite pure states,
\newblock  arXiv 1005.5295, 2010.

\bibitem{MW99}Meinrenken, E., Woodward, C.,
\newblock Moduli spaces of flat connections on 2-manifolds, cobordism, and Witten’s volume formulas,
\newblock  {Advances in geometry}, Progr. Math. \textbf{172}, 271–295, 1999.

\bibitem{ZM12}Molladavoudi, S.
\newblock On the Symplectic Reduced Space of Three-Qubit Pure States,
\newblock arXiv 1005.5295, 2012.


\bibitem{M77}Mumford, D.
\newblock Stability of projective varieties,
\newblock  {L'Enseignement Math\'{e}matique}, 1977.

\bibitem{SHK11}Sawicki, A., Huckleberry, A., Ku\'{s}, M.,
\newblock Symplectic geometry of entanglement,
\newblock  {Comm. Math. Phys.} \textbf{305}, 441–468, 2011.

\bibitem{SK11}Sawicki, A., Ku\'{s}, M.,
\newblock Geometry of the local equivalence of states,
\newblock  {J. Phys. A: Math. Theor.} \textbf{44,} 495301, 2011.

\bibitem{SWK13} Sawicki, A., Walter, M., Ku\'{s}, M.
\newblock When is a pure state of three qubits determined by its single-partite reduced density matrices?,
\newblock  {J. Phys. A: Math. Theor.} \textbf{46,} 055304, 2013.

\bibitem{WDGC12}Walter, M., Doran, B., Gross, D., Christandl, M.,
\newblock Entanglement Polytopes,
\newblock  arXiv:1208.0365, 2012.

\end{thebibliography}
\end{document}